\newtheorem{lemma}{Lemma}
\begin{document}
 
\begin{frontmatter} 

\title{LB-OPAR: Load Balanced  Optimized Predictive and Adaptive Routing for Cooperative UAV Networks}

\author[mymainaddress]{Mohammed Gharib}
\ead{alghari@clemson.edu}

\author[mymainaddress]{Fatemeh Afghah}
\ead{fafghah@clemson.edu}

\author[mysecondaryaddress]{Elizabeth Serena Bentley}
\ead{elizabeth.bentley.3@us.af.mil}

\address[mymainaddress]{Department of Electrical and Computer Engineering, Clemson University, Clemson, South Carolina, USA. }
\address[mysecondaryaddress]{Air Force Research Laboratory Rome, NY, USA. }



\begin{abstract}


Cooperative ad-hoc UAV networks have been turning into the primary solution set for situations where establishing a communication infrastructure is not feasible. Search-and-rescue after a disaster and intelligence, surveillance, and reconnaissance (ISR) are two examples where the UAV nodes need to send their collected data cooperatively into a central decision maker unit. Recently proposed SDN-based solutions show incredible performance in managing different aspects of such networks. Alas, the routing problem for the highly dynamic UAV networks has not been addressed adequately. An optimal, reliable, and adaptive routing algorithm compatible with the SDN design and highly dynamic nature of such networks is required to improve the network performance. This paper proposes a load-balanced optimized predictive and adaptive routing (LB-OPAR), an SDN-based routing solution for cooperative UAV networks. LB-OPAR is the extension of our recently published routing algorithm (OPAR) that balances the network load
and optimizes the network performance in terms of throughput, success rate, and flow completion time (FCT). We analytically model the routing problem in highly dynamic UAV network and propose a lightweight algorithmic solution to find the optimal solution with $O(|E|^2)$ time complexity where $|E|$ is the total number of network links. We exhaustively evaluate the proposed algorithm's performance using ns-3 network simulator. Results show that LB-OPAR outperforms the benchmark algorithms by $20\%$ in FCT, by $30\%$ in flow success rate on average, and up to $400\%$ in throughput.\footnote{DISTRIBUTION STATEMENT A: Approved for Public Release; distribution unlimited AFRL-2021-1606 on 25 May 2021. Other requests shall be referred to AFRL/RIT 525 Brooks Rd Rome, NY 13441} 

\end{abstract}

\begin{keyword}
FANET, UAV, routing, SDN, load balancing, performance evaluation.
\end{keyword}

\end{frontmatter}


\section{Introduction}
\label{sec:introduction}
Cooperative unmanned aerial vehicle (UAV) networks, also known as flying ad-hoc networks (FANET), recently become extremely popular for their ease of set-up and use, low price, and high maneuverability in 3D space \cite{WalidSurvey}. In such networks, nodes work cooperatively to make the communication of two far away nodes possible by acting as intermediate repeaters. Using such networks where implementing the infrastructure is not feasible, or the infrastructure is damaged, is crucially important. Wildfire monitoring \cite{wildFire1}, search and rescue after a disaster \cite{disaster}, ISR \cite{battlefield}, ad hoc UAV cloud service provider \cite{gtss}, and border surveillance \cite{border} are some examples to name. A promising architecture for such networks uses software-defined networking (SDN) as its cornerstone to add flexibility to the network control and  management \cite{178}. SDN separates the control plane from the data plane to pass the decision-making process to a software-based controller. There are excellent efforts in designing SDN-based topology management \cite{178}, resource management \cite{180}, path planning \cite{pathPlanning}, node positioning \cite{248}, and monitoring \cite{214}. The SDN-based routing problem, of how to find the optimal path from the source node to the destination, is also tackled for multi-path routing \cite{multipath}, and low mobility UAV networks \cite{lowMob}. However, the SDN-based routing problem for highly dynamic UAV networks is still a challenging unresolved problem to the best of our knowledge. 

In this paper, we consider a cooperative UAV network with a central controller. We assume a low-bandwidth reliable communication channel exists between the controller and the UAV nodes. However, the UAVs need to transfer a high volume of data, such as high-resolution pictures, videos, or thermal pictures. Hence, they need a high bandwidth communication channel for data transfer. In this way, the control plane and the data plane are separated, just like any other SDN-based UAV network. It is worth mentioning that the controller can be a ground or aerial controller with a line of sight or non-line of sight communication with the nodes. Such communication could be borrowed from the existing cellular network, be provided by the line-of-sight communication with a controller placed at a higher elevation, or use a dedicated bandwidth for command and control, as it is the common assumption in many UAV networks \cite{WalidSurvey}. The controller is responsible for decision-making in all network management processes. We assume the routing, as one of the control management tasks, is also one of the controller responsibilities. Fig. (\ref{fig::systemModel}) represents a schematic view of our system model. Since the controller is responsible for all network management tasks, any algorithm required to be performed by the controller is desired to be with a low computation and communication complexity to prevent any potential delay in network management. This fact makes the routing problem more challenging in such a network architecture. 

\begin{figure}
    \centering
    \includegraphics[width=\columnwidth]{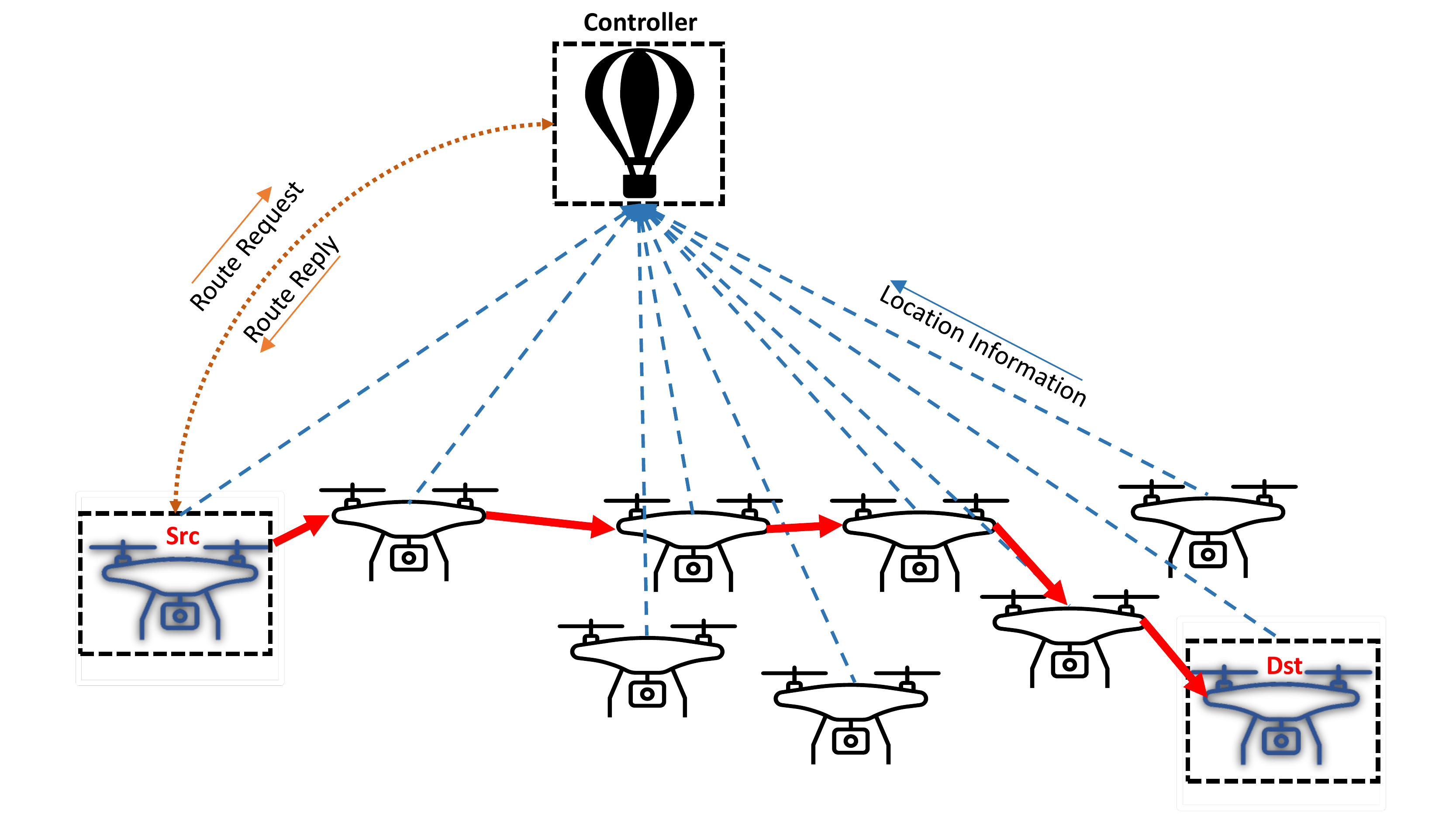}
    \caption{System model (solid lines represent a direct link where the dashed lines represent a communication which may or may not be direct)}
    \label{fig::systemModel}
\end{figure}

Cooperative UAV networks typically use the conventional routing algorithms designed for the conventional ad-hoc networks such as Mobile Ad-hoc NETwork (MANET) and Vehicular Ad-hoc NETwork (VANET). Ad-hoc On-demand Distance Vector (AODV) \cite{aodv}, Destination Sequenced Distance Vector (DSDV) \cite{dsdv}, and Optimized Link State Routing (OLSR) \cite{olsr} are some instances. Such algorithms aim at finding the shortest path between the source and the destination and do not consider the other network characteristics. Furthermore, the highly dynamic nature of UAV networks leads to fast disconnection and consequently a need for multiple reroute processes \cite{tcpreorder}. Multiple reroute processes dramatically increase the routing communication complexity as well as overall end-to-end communication delay. In this case, the data packets belonging to the same data file/stream may traverse different paths, and hence they arrive out-of-order. The same sideeffect is tackling the multi-path routing algorithms. TCP deals with some of out-of-order packets as the lost packets and orders retransmission. 

We recently published an optimized predictive and adaptive routing (OPAR) \cite{opar} for cooperative UAV networks to consider the path lifetime in addition to the path length in the path selection process. The aim was to reduce the reroute process's overheads. We find that considering the path lifetime can significantly improve UAV networks' performance in terms of throughput, flow completion time (FCT), and flow success rate. We further find that considering only the path lifetime and path length in selecting the paths may cause an unbalanced network load. Since the network uses a shared medium, unbalanced load in the network may cause performance degradation in throughput and consequently FCT, especially in the highly loaded networks. Accordingly, in this paper, we extend the OPAR algorithm to propose a load-balanced OPAR (LB-OPAR) routing algorithm by considering the network load in addition to the path lifetime and path length in selecting the paths. We show that this consideration does not impose any extra routing traffic to the network in comparison with OPAR, in the SDN-based setting. Accordingly, we model the entire problem of the SDN-based routing algorithm in a cooperative UAV network with an analytical optimization model. The proposed model is binary linear programming (BLP) problem. While BLP problems are very well-known to be  NP-complete problems, we show that in the specific case of the modeled BLP, we can find the optimized path using a graph-based algorithmic solution. The algorithmic solution is lightweight with the computational complexity of $O(|E^2|)$ where $|E|$ is the number of network links. 

LB-OPAR needs only small-size messages to be sent to the controller periodically. Each message contains three consequent positions of the UAV node. The controller predicts the link lifetime of the UAV nodes and uses the predicted values in solving the optimization problem. The prediction algorithm is lightweight, with a computation complexity of $O(|V|^2)$ where $|V|$ is the number of UAV nodes. The controller uses its own data to calculate the communication load on each node. Hence, no further communication is required. The low communication complexity and the low computational complexity of the LB-OPAR algorithm make it a perfect fit for the SDN-based model.   

To evaluate the performance of LB-OPAR, we use the ns-3 network simulator \cite{ns3}. We simulate the network to compare LB-OPAR with the well-known baseline routing algorithms AODV, DSDV, and OLSR as benchmarks. We compare the results with those of OPAR as well. We show in Section (\ref{sec:RelatedWork}) that the state-of-the-art uses these benchmarks to find the routes in the cooperative UAV networks or build new routing algorithms on the basis of these benchmarks. We selected the algorithms for comparison that cover the wide range of reactive and proactive algorithms as well as distance vector and link-state ones. We measure the network throughput, FCT, and flow success rate as the performance metrics. We show that LB-OPAR improves the network throughput by up to four times the benchmarks. It decreases the FCT for average $20\%$ and increases the flow success rate by about $30\%$.

This work's main contribution is to propose a load-balanced routing algorithm for SDN-based cooperative ad-hoc networks. More specifically, $i)$ it analytically models the routing problem in the mentioned systems with linear optimization technique and considers the path length, lifetime, and load altogether; $ii)$ it proposes an algorithmic solution for the proposed problem; $iii)$ it analytically proves the solution's complexity; $iv)$ it proposes a lightweight highly accurate link lifetime prediction method; $v)$ it exhaustively evaluates the performance of the proposed algorithm and compares it with the benchmark solutions.   

The rest of this paper is organized as follows. The state-of-the-art is reviewed in Section (\ref{sec:RelatedWork}). We propose the analytical model with its theoretical solution of the LB-OPAR in Section (\ref{sec::proposedAlg}). The performance of LB-OPAR is evaluated and compared to the baseline solutions in Section (\ref{sec:NumericalResults}). Finally, we conclude the paper in Section (\ref{sec::conclusion}).

\section{Related Work}
\label{sec:RelatedWork}
Cooperative UAV networks typically use conventional routing algorithms, which are proposed originally to be used in mobile ad-hoc networks (MANETs), vehicular ad-hoc networks (VANETs), or wireless sensor networks (WSNs). However, there are several routing algorithms dedicatedly proposed for cooperative UAV networks. Thus, in this section, we first review the conventional routing algorithms, then we present the algorithms proposed dedicatedly for UAV networks. Finally, we review the proposals for SDN-based UAV network routing algorithms to cover the state-of-the-art.

Generally, there are two categorizations for conventional ad-hoc routing algorithms \cite{adhocRoutin}. Each of them looks at the problem from a different viewpoint. However, they all are proposed to find the path with the least number of hop counts, and they do not consider other metrics such as path quality, reliability, or lifetime. The first looks at the algorithms based on their nature of being proactive, reactive, or hybrid. Proactive routing protocols periodically keep track of the network topology changes, regardless of the demand for new routes. Optimized link-state routing protocol (OLSR) \cite{olsr} and destination sequence distance vector (DSDV) \cite{dsdv} are two well-known instances of this category. In contrast to proactive protocols, reactive protocols attempt to find a route only when a new route is needed. Ad-hoc on-demand distance vector (AODV) \cite{aodv} is a very well-known instance of this category. 
The second categorization considers the method of implementing the routing algorithms to be link state or distance vector. In the link-state approach, each node builds a general view of the entire network topology to find the optimized paths. The OLSR protocol belongs to this category. In contrast, each node needs to keep only the next hop toward the destination and the path cost in the distance-vector approach. Both AODV and DSDV use this approach in their routing.




Most of the routing algorithms proposed dedicatedly for UAV networks are built based on the conventional ad-hoc routing algorithms and extend them to consider another metric in their path selection process. As an instance, Bomio et al. \cite{ON} proposed enhanced AODV (E-AODV) by taking into account a route reliability criterion. They considered a 2D area to check whether two nodes are getting closer or not. If their distance is going to be increased, the corresponding link is discarded. Li et al. \cite{LEPR} proposed link-stability estimation-based preemptive routing (LEPR) based on AODV. LEPR finds multiple disjoint paths using the AODV algorithm, calculates the expected transmission count (ETX) metric of \cite{ETX} for each link of the route, and keeps all routes to use them when the path with the best ETX becomes not available anymore. 

Pu et al. \cite{ltaolsr} proposed link quality and traffic-load aware OLSR (LTA-OLSR) to find the path with the highest quality from the set of the shortest paths found by OLSR. Song et al. \cite{olsrpdm} proposed OLSR-based mobility and delay prediction (OLSR-PMD) to choose the more stable next-hop nodes based on a prediction algorithm. Rosati et al. \cite{polsr} proposed predictive OLSR (P-OLSR) to add weight to the ETX metric of each link based on the corresponding nodes' relative speeds. Lyu et al. \cite{qngpsr} proposed  Q-Network Enhanced Geographic Ad-Hoc Routing Protocol based on GPSR (QN-GPSR). Authors reduced the premier forwarding of GPSR \cite{gpsr} algorithm by applying Q-learning, a reinforcement learning algorithm, into the GPSR routing protocol. Their results show that QN-GPSR has a higher packet delivery ratio and a lower end-to-end delay than the basic GPSR protocol. However, applying a Q-network increases the computational complexity of the GPSR algorithm by order of magnitude. Liu et al. \cite{qmr} proposed another Q-learning based Routing protocol for Flying Ad Hoc Networks and named it Q-learning based multi-objective routing (QMR). Authors show that QMR provides a higher packet delivery ratio, lower delay, and energy consumption than existing Q-learning-based routing protocols. However, the vast imposed computational complexity by the Q-learning algorithm is the main drawback of this algorithm. Adaptive and Reliable Routing Protocol with Deep Reinforcement Learning (ARdeep) \cite{ardeep} is also proposed to be applied to UAV networks. Although the authors show that it performs better than the basic GPSR algorithm in terms of average end-to-end delay and packet delivery ratio, it increases the computational complexity by order of magnitude, just like other deep learning-based algorithms. Although most of the computational complexity source of ML algorithms belongs to the training phase, a learned model might not be able to lead to high precision results in a network with a highly dynamic nature \cite{ROVIRASUGRANES2022102790}. To address this shortcoming, the learning-based methods keep the learning phase active during the network operation time to be regularly adjusted with the new live data where such online learning needs high computational complexity. 

Several other routing algorithms are proposed dedicatedly for UAV networks, but they are special-purpose algorithms or only applicable to the networks with specific missions. For instance, Sunder et al. \cite{fueling} designed a routing protocol for a UAV-based fueling system. Coelho et al. \cite{redefine} designed a routing algorithm for UAV networks where UAVs are used as WiFi repeaters to maximize the coverage. The work of \cite{packageDelivery} designed for package delivery networks. Cheriguene et al. \cite{sempr} proposed a Swarm energy-efficient multicast routing protocol for UAVs flying in group formations, to facilitate the control and information delivery among the UAVs while minimizing inter-UAV packet loss, packet re-transmission, and end-to-end delay.
It is obvious that such mission-specific routing algorithms cannot be used as general-purpose routing for all cooperative UAV networks.   

An SDN architecture is proposed to be used in UAV networks to increase the network control flexibility and improves the network performance. Different architectures are proposed for SDN-based UAV networks \cite{191,qos,infocomwkshp2018,SDNtraffic}. All the proposed architectures have a controller or a set of distributed controllers that has/have direct, reliable communication with the UAV nodes. Using such an architecture, any of the network management tasks could be migrated into the controller. However, the migrated tasks do not impose huge computational or communication complexity into the controller. Routing, as an essential network management task, could be migrated into the controller. To the best of our knowledge, the problem of finding an optimal path in an SDN-based UAV network has not yet been substantially studied. Several recently published works mentioned this problem \cite{lowMob,multipath,SDNtraffic}.

Ramaprasath et al. \cite{lowMob} proposed an SDN-based routing algorithm that utilizes AODV in the initialization phase of the network. Alas, this proposal cannot capture the high dynamicity of the UAV networks due to its multiple long wait times during the routing procedure. Hence, this algorithm's networks of interest are the UAV networks with deficient mobility. Authors of \cite{multipath} proposed an SDN-based routing algorithm that finds all possible paths toward the destination and divides the entire data file among them. Each part of the data file is then sent through one of the paths. As the out-of-order packet arrival is one of the main challenges of the highly dynamic networks, this works increases the number of such packets by using multiple paths. Authors also assume that each UAV is equipped with different network interfaces. Hence, it can send the data file part from different paths concurrently, which poses a hardware limitation for networks aimed at using this algorithm. Qi et al. \cite{SDNtraffic} proposed traffic differentiated routing (TDR), a new SDN architecture for UAV networks that includes two types of controllers, cluster controllers and a coordination controller. In this architecture, the network is clustered, and each cluster needs a controller. To arrange different cluster controllers, a coordination controller is also required. The routing algorithm of this paper is proposed to find the routes inside the clusters. Besides their new network architecture, authors assumed that the controller is aware of the current location and the nodes' speed. Using such information, the authors proposed an optimization problem in which finding the optimal solution is not practically feasible. Hence, they use approximation methods to find a near-optimal path. As we mentioned earlier, the routing problem in SDN-based cooperative UAV networks needs substantial work. Accordingly, we propose LB-OPAR as a general-purpose routing algorithm for SDN-based UAV networks in this paper. Table (\ref{tbl::relatedWork}) represents a comparison between routing algorithms proposed dedicated for UAV networks or could be used in them, including LB-OPAR.

\begin{table}[t]

\caption[]{Routing Algorithms for Cooperative UAV Networks. (The comparison of this table is partially based on the results of  Section (\ref{sec:NumericalResults}) of this paper.)}
\resizebox{.7\columnwidth}{!}{
\begin{minipage}{\textwidth}
\begin{tabular}{ | l | l | l | l | l | l | l | l | }
\hline
Algorithm  &Mobility &Routing  &Commun.&Comput. &SDN-based/ &Load  &Success\\
&Support &  Metric &  Overhead&  Overhead&Fully-dist. & Balancing&Rate\\\hline
AODV\cite{aodv} &Good &Path len.&High &Low & Fully-dist.&$\qquad\times$&Medium \\
DSDV\cite{dsdv} & VLimited &Path len.  &VHigh &Low &Fully-dist. &$\qquad\times$&VLow\\
OLSR\cite{olsr}& VLimited &Path len.&VHigh &Low &Fully-dist. &$\qquad\times$&VLow\\
E-AODV\cite{ON}&Good &Distance&High & Low& Fully-dist.&$\qquad\times$&Low\\
LEPR\cite{LEPR}&Good &ETX&High & Low &Fully-dist. &$\qquad\times$&Medium \\ 
LTA-OLSR\cite{ltaolsr}&VLimited &RSSI+load  &VHigh &Low & Fully-dist.&$\qquad\checkmark$&VLow\\
OLSR-PMD\cite{olsrpdm}&VLimited &Delay  &VHigh &Low &Fully-dist. &$\qquad\times$&VLow\\
P-OLSR\cite{polsr}&VLimited &Weighted ETX  &VHigh &Low & Fully-dist. &$\qquad\times$&VLow\\
QN-GPSR\cite{qngpsr}&Limited &Link Quality  &VHigh & VHigh& SDN-based&$\qquad\times$&Low\\
ARdeep\cite{ardeep}&Medium & PER\footnote{Packet error rate} &-&VHigh & SDN-based&$\qquad\times$ &-\\
QMR\cite{qmr}&Limited & Delay \& Energy &VHigh & VHigh& SDN-based&$\qquad\times$&Low\\
\cite{multipath}&VLimited &Multipath&VHigh &Low & SDN-based&$\qquad\times$&Low\\
\cite{lowMob}&VLimited &Path Len.  &High &VHigh & SDN-based&$\qquad\times$&VLow\\
TDR\cite{SDNtraffic}&VLimited &Link Availability&- &VHigh & SDN-based&$\qquad\times$&-\\
OPAR\cite{opar} &VGood &Path len.  &Low & VLow&SDN-based &$\qquad\times$&VHigh\\
& & \& lifetime & & &&&\\
LB-OPAR &VGood & Path len.,  &VLow &Low &SDN-based &$\qquad\checkmark$&VHigh\\
& & load \& lifetime & & &&&\\\hline
\end{tabular}
\label{tbl::relatedWork}
\end{minipage}}
\end{table}

\section{Proposed Algorithm}
\label{sec::proposedAlg}

We model the routing problem in SDN-based cooperative UAV networks with a linear programming (LP) optimization model. The optimization model considers path length, path lifetime, and network load altogether. We define the path lifetime as the minimum lifetime of all the links forming the path. Hence, we need to know the link lifetime of each network link. Obviously, the link lifetime depends mostly on the nodes' movement.  Hence, we propose a link lifetime prediction algorithm to support LB-OPAR. We further consider the path load as the maximum node load for the nodes participating in the path. We define the node load as the number of the neighboring nodes engaged with data transmission, affecting the medium access accessible with the corresponding node. In other words, the node load is the number of nodes, in the corresponding node's adjacency, that are sending or forwarding data packets, i.e. they are a source of a data flow or engaged as forwarding nodes in a communication flow. Since the controller manages the routes and knows the nodes' location, it can calculate nodes' load. The proposed optimization problem variables are Binary, which means that the model is a Binary LP problem. The Binary LP problems are well-known to be non-polynomial problems in the NP-complete category \cite{karp}. However, we show that the proposed problem is an especial case in which its solution could be found using the graph-based algorithmic method.   

In this section, we propose the LP problem, considering that we know the links' lifetime and nodes' load. We then propose the graph-based algorithmic solution, which can find the solution of the Binary LP problem. Finally, we present the details of how to predict the link lifetime and calculate the node load by the controller. Table (\ref{tbl::notation}) represents the notations used throughout this paper.

\begin{table}[t!]
\caption[]{Table of notations }
\resizebox{1\textwidth}{!}{
\begin{minipage}{\textwidth}
\begin{tabular}{ l | l  }
  $n$ & Number of UAVs\\ 
  $u_i$ & The $i^{th}$ UAV\\   
  $G(V,E)$ & Network Graph\\
  $V$ & Set of graph vertices\\
  $E$& Set of graph edges\\
  $e(i,j)$ & The edge from node $u_i$ to node $u_j$\\
  $\mathcal{T}$& Link lifetime prediction matrix\\
  $\mathcal{L}$& The vector of nods' load.\\
  $\tau_{(i,j)}$& Link lifetime of $e(i,j)$\\
  $l_i$& The load of node $u_i$\\ 
  $x_{(i,j)}$& Binary variable corresponding to $e(i,j)$\\
  $T$& The inverse of path lifetime\\
  $w_1$ & Path length weight\\ 
  $w_2$ & Path lifetime weight\\
  $w_3$ & Path load weight\\ 
  $\varsigma$& The matrix of all lifetime-load costs\\
  $\varsigma_{(i,j)}$& the lifetime-load cost of $e_{(i,j)}$\\
  $\varsigma_{path}$&The maximum lifetime-load cost of a path\\ 
  $x_i(t_0)$&The position of $u_i$ at time $t_0$ along x-axis in 3D space\\
  $y_i(t_0)$&The position of $u_i$ at time $t_0$ along y-axis in 3D space\\
  $z_i(t_0)$&The position of $u_i$ at time $t_0$ along z-axis in 3D space\\  
  $\alpha_i$& The azimuthal degree of node $u_i$ direction\\
  $\theta_i$& The polar degree of node $u_i$ direction\\
  $v_i(t_0)$& The speed of node $i$ in time $t_0$\\
  $a_i(t_0)$& The acceleration of node $i$ in time $t_0$\\
  $d_{(t_0)}^{(i,j)}$& Euclidean distance between $u_i$ and $u_j$ at time $t_0$\\
    $d_{(t_0,t_1)}^{(i)}$& Euclidean distance traversed by node $u_i$ in time $[t_0\quad t_1]$\\
  $\Re$& Communication range
\end{tabular}
\label{tbl::notation}
\end{minipage}}
\end{table}

\subsection{LB-OPAR: SDN-Based load balanced Optimized Routing}
\label{sec::optimization}

On the one hand, it is generally a challenging problem to consider multiple metrics in defining an optimization problem. Multi-objective linear programming (MOLP) is a trivial modeling technique in such cases. MOLP solution, if feasible, leads in a set of all efficient extreme points, i.e. all maximal efficient faces. The optimal solution, then, is based on the weights of importance of different objective functions. On the other hand, finding a path requires an optimization problem with integer or Binary variables, where solving the optimization problems, in most cases, results in non-integer solutions. Rounding the non-integer solutions to the closest integer value leads to a non-optimal solution. Finding the optimized solution of most of the proposed optimization problems in routing and path planning is not feasible \cite{SDNtraffic,fueling,packageDelivery}. Accordingly, the optimization problem's design has to be such that finding the optimal solution becomes feasible, if possible. Hence, we carefully designed the following optimization problem, which is a Binary LP problem. In this problem, we assume that we have a matrix of all links' lifetimes, i.e. $\mathcal{T}$, as well as the vector of all nodes' load, i.e. $\mathcal{L}$.

\begin{eqnarray}
 \min_{} & & \sum\limits_{\substack{(i,j)\in E\\i\in\{1,\cdots,n\}\\j\in\{1,\cdots,n\}}}  w_1 x_{(i,j) } + w_2 T +  w_3  L
 \label{TimeObj} \nonumber\\
   \mbox{Subject To:} & &  \sum\limits_{(s,i)\in E} x_{(s,i)} = 1 \label{Const1} \\
 & & \sum\limits_{(i,s)\in E} x_{(i,s)} = 0 \label{Const2} \\
& &\sum\limits_{(i,d)\in E} x_{(i,d)} = 1 \label{Const3} \\
& & \sum\limits_{(d,i)\in E} x_{(d,i)} = 0 \label{Const4} \\
& & \sum\limits_{\substack{(i,j)\in E\\ i\neq s }} x_{(i,j)} = \sum\limits_{\substack{(j,k)\in E \\ j\neq d}} x_{(j,k)} \label{Const5} \\
 && T \geq \frac{x_{(i,j)}}{\tau_{(i,j)}} \label{Const6}\\
 & & x_i =\sum\limits_{\substack{j=0\\ j\ne i}}^{n} x_{(i,j)} \label{Const7}\\
 && L\geq x_i l_i \label{Const8}\\
 & & x_{(i,j)} \in \{0,1\} \label{Const9}\\
  & & 0 \leq T \leq1 \label{Const10}
\end{eqnarray}

The objective function of this problem is to minimize the path length as well as path load, and maximize the path lifetime. The significance of each of the path length, lifetime, and load is set by setting the weights $w_1$, $w_2$, and $w_3$, respectively, where $\sum_{i=1}^{3}w_i=1$. The Binary variables $x_{(i,j)}$s represent whether the corresponding network link $e_{(i,j)}$ participate in the optimal path or not. If the link $e_{(i,j)}$ is selected to be a part of an optimized path, $x_{(i,j)}$ is set to one, elsewhere it is set to zero. In this objective function, $T$ and $L$ represent the inverse path lifetime, and the path load, respectively. While $T$ has to be minimized according to the objective function, it has to be greater or equal to $\frac{x_{(i,j)}}{\tau_{(i,j)}}$ for all links $x_{(i,j)}$s, according to Constraint (\ref{Const6}). It is the idea to find the path lifetime, i.e. the minimum link lifetime for the links participated in the path, and consider it in the objective function. Similarly, the path load $L$ has to be minimized. To find the path load we minimize $L$ in the objective function where in Constraint (\ref{Const8}) we force $L$ to be greater than or equal to $x_il_i$, where according to Constraint (\ref{Const7}) $x_i =\sum_{\substack{j=0, j\ne i}}^{n} x_{(i,j)}$, and $l_i$ is the network load of node $i$.      

We need Constraints (\ref{Const1}) and (\ref{Const2}) to let the source node start the path and prevent it from being in a loop. Similarly, we need Constraints (\ref{Const3}) and (\ref{Const4}) to let the destination node be chosen as the last node in the path and terminate the path selecting operation. Constraint (\ref{Const5}) assures that the intermediate node, which receives an active link, has to have an outgoing active link. As we discussed earlier, the proposed problem variables are Binary, which is shown in Constraint (\ref{Const9}). Finally, we need Constraint (\ref{Const10}) in practice to remove the links with a very short lifetime. How short the link could be to be discarded depends on the metric chosen for link lifetime to be second, millisecond, etc.  

\subsection{Optimization Problem Solution}

Generally, Binary LP problems are in the NP-complete category \cite{karp}, and hence, they do not have a polynomial-time solutions. In this section, we show that the LB-OPAR problem is a special case of Binary LP problems where the optimal solution can be found via graph-based algorithmic solution. Needless to say, the proposed algorithmic solution might not be applicable to other Binary LP problems. While the solution is as follows, Algorithm (\ref{alg::LB-OPAR}) represents the pseudocode of the algorithm. We prove through Lemma (\ref{lem::opt}) that the output of Algorithm (\ref{alg::LB-OPAR}) is the optimal solution of the Binary LP problem proposed in section (\ref{sec::optimization}). The computation and space complexity of Algorithm (\ref{alg::LB-OPAR}) is also investigated in Lemma (\ref{lem::complexity}).

To find the solution to the proposed optimization problem, i.e. the optimal path, we consider the network as a graph $G(V,E)$ where $V$ and $E$ are the sets of graph vertices and edges, respectively. We consider any UAV node as a vertex in graph $G$. Any two neighbor UAVs which are located at the communication range of one another are shown by an edge in graph $G$. Thus, the edge $e_{(i,j)}$ in graph $G$ represents a physical link between UAV nodes $u_i$ and $u_j$. If the communication range between two nodes is not identical, we can consider $e_{(i,j)}$ as a directional edge starting from node $u_i$ toward node $u_j$, without loss of generality.   


\begin{algorithm}
\caption{Finding the solution of the optimization problem}
 \label{alg::LB-OPAR}
\begin{algorithmic}
\STATE{$objValue=\infty$}, \COMMENT{Set the initial objective value equal to $\infty$.}
\STATE{$path=\emptyset$}, \COMMENT{Set the initial path to empty path.}
\STATE{$\varsigma\leftarrow w_2\frac{1}{\mathcal{T}}+w_3\mathcal{L} $}, \COMMENT{Form $\varsigma$, a $n\times n$ matrix containing the lifetime-load cost of all graph edges. The lifetime-load cost for each edge $e_{(i,j)}$ is calculated as $\varsigma_{(i,j)}=w_2\frac{1}{\tau_{(i,j)}}+w_3l_j$.  }
\STATE{$\varsigma_{Sorted}=descentSort(\varsigma)$}, \COMMENT{Sort the lifetime-load cost matrix in descending order.}
\STATE{$G'(V,E)=G(V,E)$}, \COMMENT{Let $G'(V,E)$ equal to the main graph $G(V,E)$.}
\WHILE{$isPath(G'(V,E),src,dst)$} 
\STATE{$newPath=BFS(G'(V,E),src,dst)$}, \COMMENT{To find the shortest path with BFS algorithm.}
\IF{$newObjValue < objValue$}
\STATE{$path=newPath$},\COMMENT{Set the path equal to the new path.}
\STATE{$objValue=newObjValue$},\COMMENT{Set the objective value equal to the objective value of the new path.}
\ENDIF
\STATE{$\varsigma_{path}	\leftarrow lifetimeLoadCost(newPath)$}, \COMMENT{Calculate the maximum path lifetime-load cost for the edges forming the $newPath$.}
\STATE{$G'(V,E) \leftarrow remove(\varsigma_{path}, \varsigma_{Sorted},G'(V,E))$}, \COMMENT{Remove all links with the lifetime-load cost greater than or equal to the $\varsigma_{path}$.}
\ENDWHILE

\end{algorithmic}
\end{algorithm}


To find the optimal path, we first define a lifetime-load cost metric for each edge. We name this metric $	\varsigma_{(i,j)}$ for edge $e_{(i,j)}$ and we can calculate this metric as $	\varsigma_{(i,j)}=w_2\frac{1}{\tau_{(i,j)}}+w_3l_j$. The basic idea of Algorithm (\ref{alg::LB-OPAR}) is to use the breadth-first search (BFS) algorithm \cite{bfs} to find the shortest path in graph $G(V,E)$, regardless of its lifetime or load. BFS is the algorithm with the lowest computational complexity to find the shortest path in a graph. We then calculate the objective value for the path and the lifetime-load cost, i.e. $	\varsigma_{(i,j)}$, of all edges forming the path. Next, we remove all graph edges with less or equal lifetime-load cost than the maximum $\varsigma_{(i,j)}$ of the path. We apply the BFS algorithm again on the new graph. If the new path has a lower objective value, we consider it as the optimal path; elsewhere, we discard the new path. The procedure is repeated until no more paths stay between the source and the destination nodes. Algorithm (\ref{alg::LB-OPAR}) shows the detailed pseudocode of the proposed algorithm. We show in Lemma (\ref{lem::opt}) that the output of Algorithm (\ref{alg::LB-OPAR}) is the optimal path. We then calculate the space and computational complexity of this algorithm in Lemma (\ref{lem::complexity}).

\begin{lemma}
\label{lem::opt}
If there is more than one path from the source node toward the destination, Algorithm (\ref{alg::LB-OPAR}) returns the optimal path.
\end{lemma}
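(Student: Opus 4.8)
The plan is to reduce the Binary LP to a purely combinatorial optimisation over $s$–$d$ paths and then to read Algorithm~(\ref{alg::LB-OPAR}) as a ``bottleneck peeling'' procedure. First I would analyse the feasible region: Constraints~(\ref{Const1})--(\ref{Const5}) together with integrality~(\ref{Const9}) are flow-conservation constraints forcing the active edges $\{e_{(i,j)}:x_{(i,j)}=1\}$ to form one simple directed path $P$ from $s$ to $d$ (the source emits one unit and absorbs none, the destination the reverse, and every entered interior vertex is also left), so feasible $0/1$ vectors biject with $s$–$d$ paths and $\sum_{(i,j)}x_{(i,j)}=|P|$ is the hop count. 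Since the objective minimises $T$ and $L$ while (\ref{Const6}) and (\ref{Const8}) bound them only from below, at optimum $T=\max_{e\in P}1/\tau_e$ and $L=\max_{v\in P}l_v$; the objective thus collapses to $f(P)=w_1|P|+w_2\max_{e\in P}\tfrac{1}{\tau_e}+w_3\max_{v\in P}l_v$, which is nondecreasing in each of hop count, lifetime bottleneck, and load bottleneck, so an optimum is Pareto-optimal in these three statistics.

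Next I would establish a loop invariant. Because the only edges ever deleted are those with $\varsigma$-cost at least $\varsigma_{path}$, the working graph at the top of each iteration always has the form $\{e:\varsigma_e<\beta_t\}$ for a threshold $\beta_t$ that strictly decreases between iterations, since the removal always discards the maximum-$\varsigma$ edge of the path just found. Hence the BFS path $P_t$ produced in round $t$ has hop count nondecreasing in $t$ and combined bottleneck $\varsigma_{P_t}=\max_{e\in P_t}\varsigma_e$ strictly decreasing, and the loop halts after finitely many rounds, exactly when $s$ and $d$ are separated.

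The core is a domination claim. Fix any $s$–$d$ path $Q$ and let $t^{\ast}$ be the last round in which every edge of $Q$ survives. Since $Q$ lies in the working graph and BFS returns a minimum-hop path there, $|P_{t^{\ast}}|\le|Q|$; since $Q$ is destroyed immediately after round $t^{\ast}$, some edge of $Q$ has $\varsigma$-cost at least $\varsigma_{P_{t^{\ast}}}$, so $\varsigma_{P_{t^{\ast}}}\le\max_{e\in Q}\varsigma_e$. Thus $P_{t^{\ast}}$ is no worse than $Q$ in both hop count and bottleneck, and because the algorithm keeps the running minimum of the objective it records a value at most $f(Q)$. Taking $Q$ to be an optimal path then forces the returned path to attain the optimum.

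The step I expect to be the main obstacle is the final inequality $f(P_{t^{\ast}})\le f(Q)$. The peeling is governed by the single scalar $\varsigma_{(i,j)}=w_2/\tau_{(i,j)}+w_3 l_j$, whereas $f$ contains the two \emph{separate} maxima $w_2\max_e 1/\tau_e$ and $w_3\max_v l_v$, and in general $\max_e(w_2/\tau_e+w_3 l_j)\neq w_2\max_e 1/\tau_e+w_3\max_v l_v$. Consequently ``$\varsigma_{P_{t^{\ast}}}\le\varsigma_Q$'' does not on its own control the lifetime and load bottlenecks individually. Making the argument airtight therefore requires either arguing that the objective the algorithm actually compares is the monotone function $w_1|P|+g(\varsigma_P)$ of the fused bottleneck, so that domination in $\varsigma$ is genuinely domination in $f$, or else refining the peeling so that the two bottlenecks are controlled simultaneously; reconciling the single edge metric with the split objective is the crux.
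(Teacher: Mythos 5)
Your reduction of the BLP to the combinatorial objective $f(P)=w_1|P|+w_2\max_{e\in P}\tfrac{1}{\tau_e}+w_3\max_{v\in P} l_v$ and your reading of the algorithm as threshold peeling on the fused cost $\varsigma$ are both correct, and your domination claim ($|P_{t^{\ast}}|\le|Q|$ and $\varsigma_{P_{t^{\ast}}}\le\max_{e\in Q}\varsigma_e$ for the last round $t^{\ast}$ in which $Q$ survives) is exactly the skeleton of the paper's own induction-plus-contradiction argument. The obstacle you flag at the end is not a detail you failed to see through: it is a genuine gap, and it cannot be closed as stated when $w_2>0$ and $w_3>0$ simultaneously. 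Because $\max_{e}\bigl(w_2/\tau_e+w_3 l_{j}\bigr)$ can be strictly smaller than $w_2\max_e 1/\tau_e+w_3\max_v l_v$, a path can dominate in the fused bottleneck yet lose badly in the objective. Concretely, take a two-hop path $P$ whose edges carry cost pairs $(w_2/\tau,\,w_3 l)=(5,0)$ and $(0,5)$, so $\varsigma_P=5$ but its objective contribution is $5+5=10$, and a three-hop path $Q$ with pairs $(3,3),(0,0),(0,0)$, so $\varsigma_Q=6$ but contribution $3+3=6$; with $w_1$ small, $f(Q)<f(P)$. BFS returns the shorter $P$, the peeling step deletes every edge with $\varsigma\ge 5$ and thereby destroys $Q$'s $(3,3)$ edge, and $Q$ is never enumerated even though it is optimal. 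So the inequality $f(P_{t^{\ast}})\le f(Q)$ you need really is false in general, and keeping the running minimum of the true objective does not save the argument, because the optimal path may never be among the paths examined.

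You should also know that the paper's proof has precisely the same hole: the sentence asserting that a better path ``has to have lifetime-load costs less than that of the maximum cost of $p$'s links'' is the unjustified implication from ``smaller objective'' to ``smaller fused bottleneck,'' which the example above refutes. The lemma (and both proofs) do go through in the degenerate cases $w_2=0$ or $w_3=0$, where $\varsigma$ reduces to a single bottleneck and $f(P)=w_1|P|+\varsigma_P$ is monotone in $(|P|,\varsigma_P)$ --- which incidentally covers many of the weight settings the paper later reports as optimal --- and they would go through in general if the objective were redefined with the single fused term $\max_{e\in P}\varsigma_e$ in place of the two separate maxima, or if the peeling were done on a two-dimensional frontier of $(1/\tau,l)$ bottlenecks rather than on one scalar. (There is also a minor bookkeeping mismatch worth noting: Constraint (\ref{Const8}) maximizes $l_i$ over nodes with an \emph{outgoing} active edge, i.e.\ every path node except the destination, whereas $\varsigma_{(i,j)}$ charges the load of the \emph{head} $j$, i.e.\ every path node except the source.) As written, your proof is incomplete for the reason you yourself identified, and the statement needs either the restriction $w_2 w_3=0$ or a modification of the algorithm or objective to be provable.
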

\begin{proof}
We prove this lemma using proof by induction. If there is at least one path, we know that the BFS algorithm will find it at the first iteration of the algorithm, and it is the optimal path. Assume that the algorithm runs $(n+1)$ iterations to return the final answer. Again, assume that after $n^{th}$ iteration, path $p$ is returned as the optimal path, and it is optimal until then. To prove the lemma, we need to show that the output of the last iteration, i.e. $(n+1)^{th}$ iteration, will be optimal, too. In this case, we have to show that the link(s) that we removed from the graph in $n^{th}$ iteration did not lead to discarding the optimal path. We use proof by contradiction to prove it. Hence, we assume that there exists a path $\bar{p}$ with an objective value less than that of $p$, which at least one of its links is discarded in $n^{th}$ iteration. It is worth mentioning that, since Algorithm (\ref{alg::LB-OPAR}) first finds the shortest path and then calculates the lifetime-load cost and discards some links, $\bar{p}$ cannot be shorter than $p$. If one of $\bar{p}$ links is discarded, it means that the link has a lifetime-load cost more than or equal to the maximum lifetime-load cost of $p$'s links. When $\bar{p}$ is not shorter than $p$, all of its links have to have lifetime-load costs less than that of the maximum cost of $p$'s links to become the optimal path. We meet a contradiction; hence, the lemma is proved. If there exists more than one path from the source node toward the destination, Algorithm (\ref{alg::LB-OPAR}) returns the optimal path. 
\end{proof}

\begin{lemma}
\label{lem::complexity}
The computational complexity of Algorithm (\ref{alg::LB-OPAR}) is $O(|E|^2)$ for connected networks, i.e. there is at least one path between any pair of nodes. This algorithm's space complexity is also $O(|V|+|E|)$.
\end{lemma}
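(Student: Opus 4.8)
The plan is to separate the running time into three parts --- the preprocessing before the \textbf{while} loop, the number of loop iterations, and the cost of a single iteration --- and then to account for each data structure's footprint to obtain the space bound. For a connected network $|E|\ge |V|-1$, so $O(|V|+|E|)=O(|E|)$ throughout, a simplification I would use freely.

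First I would dispose of the preprocessing. Building the lifetime-load costs visits each edge once, and sorting them into $\varsigma_{Sorted}$ costs $O(|E|\log|E|)$; both are dominated by the $O(|E|^2)$ target and can be absorbed later. The central step is bounding the iteration count. I would argue that \emph{every} iteration deletes at least one edge: $\varsigma_{path}$ is the maximum lifetime-load cost among the edges of the current path, and the removal step discards every edge of cost at least $\varsigma_{path}$, so in particular the edge attaining that maximum disappears. Since edges are only deleted and never restored, the loop runs at most $|E|$ times before no path remains. I would also record the sharper fact that the thresholds $\varsigma_{path}$ strictly decrease across iterations --- once all edges of cost $\ge \varsigma_{path}$ are gone, any later path is built from strictly cheaper edges --- so the deletions sweep $\varsigma_{Sorted}$ monotonically and their total cost over the whole run is only $O(|E|)$.

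Next I would cost one iteration. The dominant operation is the BFS call at $O(|V|+|E|)=O(|E|)$; evaluating the objective and the path's maximum cost are each linear in the path length, hence $O(|V|)$; and the edge removals contribute $O(|E|)$ in aggregate by the monotonicity above rather than per iteration. Multiplying the $O(|E|)$ BFS cost by the $O(|E|)$ iteration bound gives $O(|E|^2)$, which also subsumes the preprocessing, completing the time analysis.

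For space I would store $G$ and $G'$ as adjacency lists carrying $\varsigma_{(i,j)}$ on each edge record ($O(|V|+|E|)$), keep $\varsigma_{Sorted}$ as one entry per edge ($O(|E|)$), and note that the BFS queue, the visited flags, and the stored $path$ are each $O(|V|)$, for a total of $O(|V|+|E|)$. The main obstacle I anticipate is the per-iteration removal bound: the literal ``$n\times n$ matrix'' description of $\varsigma$ would cost $O(|V|^2)$ space, and a naive rescan of all edges each iteration would inflate the per-iteration cost, so I would be careful to (i) represent $\varsigma$ sparsely inside the adjacency list, since only the $O(|E|)$ true edges carry finite cost, and (ii) invoke the strict monotonicity of $\varsigma_{path}$ to charge the removals once globally rather than once per iteration. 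Getting that amortization argument right is the crux that keeps both the $O(|E|^2)$ time and $O(|V|+|E|)$ space bounds intact.
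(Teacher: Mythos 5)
Your proof is correct and follows essentially the same decomposition as the paper's: at most $|E|$ iterations of the while loop, each dominated by a BFS costing $O(|V|+|E|)=O(|E|)$ on a connected graph, plus an $O(|E|\log|E|)$ sort that is absorbed, and an $O(|V|+|E|)$ accounting of the data structures for space. You go slightly further than the paper in two useful places --- explicitly justifying the $|E|$-iteration bound by observing that each iteration deletes at least the edge attaining $\varsigma_{path}$, and noting that $\varsigma$ must be stored sparsely rather than as a literal $n\times n$ matrix for the $O(|V|+|E|)$ space claim to hold --- both of which the paper asserts without argument.
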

\begin{proof}
We know that the computational complexity of the BFS algorithm is $O(|V|+|E|)$ where $|V|$ and $|E|$ stand for the number of graph vertices and edges, respectively \cite{bfs}. In Algorithm (\ref{alg::LB-OPAR}), in the worst case, we need to perform BFS $|E|$ times. We further need to sort the lifetimes-load cost matrix. The worst-case complexity of sorting the lifetime-load cost matrix is $O(|E|\log|E|)$. Hence, the entire algorithm complexity is $O(|E|(|V|+|E|)+|E|\log|E|)=O(|E|^2+|E||V|+|E|\log|E|)$. We know that $O(|E|)\ge O(|V|)$ for the connected networks. Hence, the computational complexity of Algorithm (\ref{alg::LB-OPAR}) is $O(|E|^2)$.
We further know that the space complexity of the BFS algorithm is $O(|V|)$. We need further $O(|E|)$ space to store the lifetime-load cost matrix. The space complexity of the sort is $O(1)$. Hence, the space complexity of Algorithm (\ref{alg::LB-OPAR}) is $O(|V|+|E|)$. 
\end{proof}

It is worthy to mention that we have two reasons in choosing our LP-based methodology instead of workflow scheduling techniques such as the reference one multi-objective heterogeneous earliest-finish-time  (MOHEFT) \cite{MOHEFT}, and the much improved one FAst  workflow  scheduling  approach based  MOHEFT  using BAcktraking and CHeckpointing  (FAMOBACH) \cite{FAMOBACH}. Workflow scheduling techniques, in general, work with much lower computational complexity than LP algorithms and lead to near-optimal solutions. However, our proposed solution finds the exact optimal answer to the problem as we proved it in Lemma (\ref{lem::opt}),  and the computational complexity of our solution is $O(|E|^2)$ at its worst case, as we shown in Lemma (\ref{lem::complexity}), which is fairly low in comparison with the mentioned alternatives.

\subsection{Link Lifetime Prediction and Node Load Calculation}
\label{sec::LLTprediction}

The optimization problem of Section (\ref{sec::optimization}) considers a $n\times n$ matrix $\mathcal{T}$ which contains link lifetimes as well as vector $\mathcal{L}$ with $n$ elements which contain nodes' load. In this section, we describe how we predict the link lifetimes and calculate node load to form $\mathcal{T}$ and $\mathcal{L}$. We start with $\mathcal{T}$ where we need only three consecutive node locations to fill it. Each node regularly sends a packet to the controller, including its last three consecutive locations in 3D space. The controller, in its turn, calculates the azimuthal and polar directions of each node as well as their speed and acceleration. Using this information, the controller calculates the time for each pair of neighboring nodes that they go out of the communication range of each other. To use simpler notations and without the loss of generality, we consider the communication range of all nodes equal and represent it with $\Re$. It is worthy to mention that our predicted lifetimes stay precise if the corresponding nodes do not change their direction or acceleration. 

We represent the location of node $u_i$ at time $t$ by the tuple $(x_i(t),y_i(t),z_i(t))$. We show the distance traversed by the node $u_i$ in the time interval $[t_1 \quad t_2]$ by $d_{(t_1,t_2)}^{(i)}$ and it could be easily calculated by Equation (\ref{eq::distance}). 

\begin{equation}
\label{eq::distance}
\scriptstyle 
    d_{(t_1,t_2)}^{(i)}=\sqrt{[x_i(t_2)-x_i(t_1)]^2 + [y_i(t_2)-y_i(t_1)]^2 + [z_i(t_2)-z_i(t_1)]^2}
\end{equation}

Considering the three consecutive node locations at times $t_0$, $t_1$, and $t_2$, we calculate the azimuthal angle $\alpha_i$ and polar angle $\theta_i$ of the node's movement direction using Equations (\ref{eq::azimuthal}) and (\ref{eq::polar}), respectively.

\begin{equation}
\label{eq::azimuthal}
\alpha_i=tan^{-1}({\frac{y_i(t_2)-y_i(t_1)}{x_i(t_2)-x_i(t_1)}})\\
\end{equation}

\begin{align}
\label{eq::polar}
\theta_i&=cos^{-1}({\frac{z_i(t_2)-z_i(t_1)}{d_{(t_1,t_2)}^{(i)}}})\\\nonumber
&=tan^{-1}({\frac{\sqrt{[x_i(t_2)-x_i(t_1)]^2 + [y_i(t_2)-y_i(t_1)]^2 }}{z_i(t_2)-z_i(t_1)}})
\end{align}

We can calculate speed $v_i$ and acceleration $a_i$ of node $u_i$ respectively using Equations (\ref{eq::3dSpeed}) and (\ref{eq::acc}).

\begin{equation}
\label{eq::3dSpeed}
v_i(t_1)= \frac{\sqrt{d_{(t_0,t_1)}^{(i)}}}{t_1-t_0}
\end{equation}

\begin{equation}
\label{eq::acc}
a_i(t_2)=\frac{v_i(t_2)-v_i(t_1)}{t_2-t_0}
\end{equation}

Next, we can calculate the predicted location of the node $u_i$ at time $(t_2+\Delta t)$ using Equation (\ref{eq::3dNextPosAcc}). Then, we can calculate the distance between nodes $u_i$ and $u_j$ at time $(t_2+\Delta t)$ using Equation (\ref{eq::nodeDistance}). We represent this distance by $d_{(i,j)}(t_2+\Delta t)$ where $\Delta t$ is unknown. Knowing the communication range of nodes, i.e. $\Re$, we can calculate the time $\Delta t$ using Equation (\ref{eq::root}). This variable is the time required for the nodes to go out of the communication range of one another and represents the lifetime of the link $e_{(i,j)}$. Equation (\ref{eq::root}) is a simple nonlinear equation with only one unknown variable, i.e. $\Delta t$. This equation can be easily solved with any lightweight numerical method such as the Bisection method. It is worth mentioning that the LB-OPAR is a modular routing algorithm. Hence, the prediction algorithm proposed here, while its accuracy is validated in Section (\ref{sec::predEval}), could be replaced with any other prediction algorithm with no need for any modification in the overall routing
solution. 

{\small
\begin{align}
\label{eq::3dNextPosAcc}
&\begin{cases}
x_i(t_2+\Delta t)=x_i(t_2)+( v_i(t_2) \Delta t+\frac{1}{2}a_i \Delta t^2 ) sin(\theta_i) cos(\alpha_i)\\
y_i(t_2+\Delta t)=y_i(t_2)+(  v_i(t_2)\Delta t+\frac{1}{2}a_i \Delta t^2 ) sin(\theta_i) sin(\alpha_i)\\
z_i(t_2+\Delta t)=z_i(t_2)+( v_i(t_2)\Delta t+\frac{1}{2}a_i \Delta t^2 ) cos(\theta_i) 
\end{cases}
\end{align} }

\begin{equation}
\label{eq::nodeDistance}
d_{(t_2+\Delta t)}^{(i,j)}=\\
\sqrt{ \begin{aligned} 
&(x_i(t_2+\Delta t)-x_j(t_2+\Delta t))^2+\\
&(y_i(t_2+\Delta t)-y_j(t_2+\Delta t))^2+\\
&(z_i(t_2+\Delta t)-z_j(t_2+\Delta t))^2
\end{aligned}
}
\end{equation}
\begin{equation}
\label{eq::root}
d_{(t_2+\Delta t)}^{(i,j)}-\Re=0
\end{equation}

In a wireless network, a node can start transmitting only when the channel is empty, which happens when no other node in the communication range of the sending node transmits a packet. If, for any reason, more than one node attempt to send a packet simultaneously, a collision will happen. Generally, wireless networks, via their medium access control (MAC) protocols, try to avoid the collision by preventing one of the nodes from sending its own packets at the same time. This fact leads to a delay in flow completion time. In our SDN-based system, the controller is aware of all transmitting flows and their paths. Hence, it can calculate the number of nodes that may attempt to send a packet in the communication range of a sending node. We refer to this number as the node's load.

To calculate each node's load, the controller starts the network operation with all zeros for the nodes' load. Any time the controller receives a route request, it calculates the optimal path and determines the nodes affected by the new route. Any node included in the new route and all its one hop neighbors are considered as the nodes affected by the new route. Since the controller potentially knows all nodes' locations, it can also calculate the distance between each pair of nodes, and it further knows the communication range of each node. Hence the controller can determine the affected nodes by the new route. The controller then increases the load of the determined nodes by one. Every time that a route becomes unavailable, the controller decreases the affected nodes' load by one. In two cases, a route is considered unavailable. First, the source node does not receive the TCP acknowledgment packets, and consequently, the source node orders a reroute. Second, the source node receives the acknowledgment of the last packet of the file and sends a flow termination message to the controller.    

\section{Performance Evaluation}
\label{sec:NumericalResults}
To evaluate the performance of LB-OPAR, we use ns-3 network simulator. Ns-3 is a well-known discrete-event simulator widely used for research and development \cite{ns3}. We simulate the network for different scenarios with a different number of nodes and a different number of concurrent file-transfer flows to exhaustively evaluate the network performance in different situations. Network throughput, flow completion time, and flow success rate are measured as the performance evaluation metrics. To show the significant improvement of LB-OPAR in the network performance, we compare its results with the baseline benchmarks AODV, OLSR, DSDV as well as the OPAR algorithm. We select the baseline algorithms that cover the proactive and reactive, link-state and distance-vector, as well as SDN-based and fully distributed routing algorithms. In this section, we first describe the simulation setting. Next, we evaluate the prediction algorithm proposed in Section (\ref{sec::LLTprediction}). Since LB-OPAR is sensitive to its weights $w_1$, $w_2$, and $w_3$, we then perform an analysis on these weights to find the proper value for them in different network settings. Finally, we represent the results of the comparison among LB-OPAR, AODV, DSDV, OLSR, and OPAR, along with the analysis.      

\subsection{Simulation Setting}

We simulate a UAV network with $n$ nodes in a $2000\times 300\times 50$ 3D space. We perform two sets of simulation scenarios. In the first set of the simulation scenarios, we vary the number of nodes from 50 to 100 in an increment step of 5. In this set of scenarios, five source nodes start sending their data to five corresponding destination UAVs. The generated traffic is FTP over TCP NewReno. A 5 MB data file is sent in each flow. In the second set of scenarios, we set the number of nodes equal to fifty, i.e. $n=50$, and vary the number of concurrent flows from one to ten. Each simulation instance is run for 500 seconds. UAV nodes move based on two different random movement patterns, 3D random waypoint, and 3D Gauss-Markov model. While random waypoint chooses its waypoints in a completely random manner, the Gauss-Markov model is closer to pre-planned scenarios. It prevents the sudden significant direction change by applying a limitation on the variability of the movement angle. We choose these two mobility models to show that LB-OPAR shows excellent performance in both random and pre-planned scenarios. The speed range is set to $[0 \quad 50]$ $m/s$. We run each simulation instance 20 times with different random seeds, measure the parameters of interest, and report the average results.

We set the transmission power to 7.5 dBm for each UAV. We use the free-space propagation loss model, also known as Friis. The propagation loss of UAV transmission is very well-known to behave close to the Friis model. The constant speed propagation model is also selected as the propagation delay model. UAV nodes use IEEE 802.11b as their wireless communication standard. Table (\ref{tbl::simSetting}) represents the simulation parameters.

\begin{table}[t]
\caption[]{Simulation Setting }
\resizebox{1\textwidth}{!}{
\begin{minipage}{\textwidth}
\begin{tabular}{ l | l  }
 Simulator version & ns-3 3.30\\
  Number of UAVs & $[50\quad 100]$ \\
  Area size & $300\times 2000\times 50 m$\\
  Transmission power & 7.5 dBm\\
  Number of concurrent flows & $[1 \quad 10]$\\
  File size & 5 MB\\
  Simulation time & 500 sec\\
  Speed range & $[0 \quad 50] m/s$\\
  Mobility models & 3D RWP\\
  Traffic type & TCP NewReno\\
  Wireless communication standard & IEEE 802.11b\\
  Propagation loss model & Free-space propagation loss\\ 
  Propagation delay model & Constant speed propagation delay
\end{tabular}
\label{tbl::simSetting}
\end{minipage}}
\end{table}

We measure the flow success rate, network throughput, and flow completion time to make the comparison. We define flow success rate as the rate of the flows that successfully deliver their entire 5 MB file to the corresponding destination node in the simulation time. It is worth mentioning that if the simulation time is infinite, all flows with any of the routing algorithms can deliver their files successfully. However, practically it is not feasible to keep the simulation without stop time. Hence, if a flow destination does not receive all of the 5 MB file's packets in the simulation time, we consider that flow as unsuccessful. In the real world, each communication flow may have a deadline. The simulation time is the mapping of the communication deadline concept in the simulation world.

We use the standard definition for throughput, which is the rate of successful packet delivery over the communication bandwidth, and it is measured in Mbps. While we report the average throughput for all successful flows, we find that it might not be fair to report the average of raw throughput values. In some cases, a specific routing algorithm works worse than the others and fails in finding the proper routes. As an instance, it may successfully deliver just one file out of ten. Assume that the source node and the destination be physical neighbors for the successful flow. In this case, it achieves a close to $1$ throughput, where the other routing algorithms, which may have a perfect flow success rate, will show much lower throughput. Hence, to report a fair comparison, we weight the throughput with its corresponding flow success rate. 

Finally, we define FCT as the time between the first packet of the 5 MB file sent by the source node and the last packet of the same file received by the destination node. Since the simulation time is 500 seconds, some flows may fail in delivering all of their packets to the corresponding destination in the simulation time. Hence, we consider the FCT equal to 500 for the unsuccessful flows as the lower bound for the FCT of those flows. 

\subsection{Prediction Evaluation}
\label{sec::predEval}

To evaluate the prediction algorithm presented in Section (\ref{sec::LLTprediction}), we design a simulation scenario and compare its results with those of the extrapolation algorithm as a standard baseline. We use Newton divided differences extrapolation \cite{newton} to predict node's position in time $t_2+\Delta t$, using three consequent node positions in times $t_0$, $t_1$, and $t_2$. It is worthy to note that all other standard extrapolation algorithms, such as Lagrange extrapolation, result in the same value of the Newton divided differences method \cite{numerical}. We choose this algorithm for its better computation complexity in comparison with the others. Next, we put the predicted position in Equation (\ref{eq::nodeDistance}) to calculate the distance between $u_i$ and $u_j$ at time $t_2+\Delta t$, and form Equation (\ref{eq::root}). Then, we use a lightweight numerical method such as Bisection to find the root and equivalently predict the link lifetime between the corresponding nodes. We measure the prediction error of the link lifetime as the absolute difference between the predicted lifetime and the simulated one. 

We consider a network with a different number of nodes where the nodes start moving for a warmup process of 100 seconds. We perform the prediction and then continue the simulation for extra 500 seconds. We predict the link lifetimes using both the algorithms of Section (\ref{sec::LLTprediction}) and the extrapolation. We then watch the links to measure their simulated lifetimes. Next, We calculate the expected value of the error for both methods in each scenario. Obviously, we predict the lifetime for the links that already existed at the time of prediction. We consider the lifetime of the links that last alive until the end of the simulation as the remaining simulation time, i.e. 500 seconds. We further consider the lifetime of the nodes in which the prediction returned a value more than the simulation time, equal to the remaining simulation time. We run each scenario 1000 times and calculate the expected value of the error and its standard deviation. While the expected value of the error shows the accuracy of the prediction algorithm, the standard deviation shows its precision. The precision represents the interval of the error distribution around the expected value. We find that the curves of all network densities behave almost identical. Hence, we report only the results for a network with 50 nodes. 

Fig. (\ref{fig::accuracy}) shows the cumulative distribution function (CDF) for the results of the expected value and the standard deviation of the error for both prediction algorithms presented in Section (\ref{sec::LLTprediction}) and the one use extrapolation. Fig. (\ref{fig::accuracy}a) shows that the prediction algorithm proposed here is on average about 70 seconds more accurate than the one use extrapolation. Fig. (\ref{fig::accuracy}b) shows the CDF of the standard deviation around the expected value, which represents the prediction precision. The lower standard deviation value shows higher precision, and consequently, a more reliable prediction algorithm. This figure shows that the standard deviation for the proposed prediction algorithm is averagely around 20 seconds, where this value for the extrapolation algorithm is about 90 seconds. Regardless of the accuracy of the algorithm, the standard deviation shows that the proposed algorithm is more reliable than that of extrapolation since it has fewer fluctuations.  

\begin{figure}[t!]
	\centering
	\subfloat[Error expected value ]{\includegraphics[width=.5\linewidth]{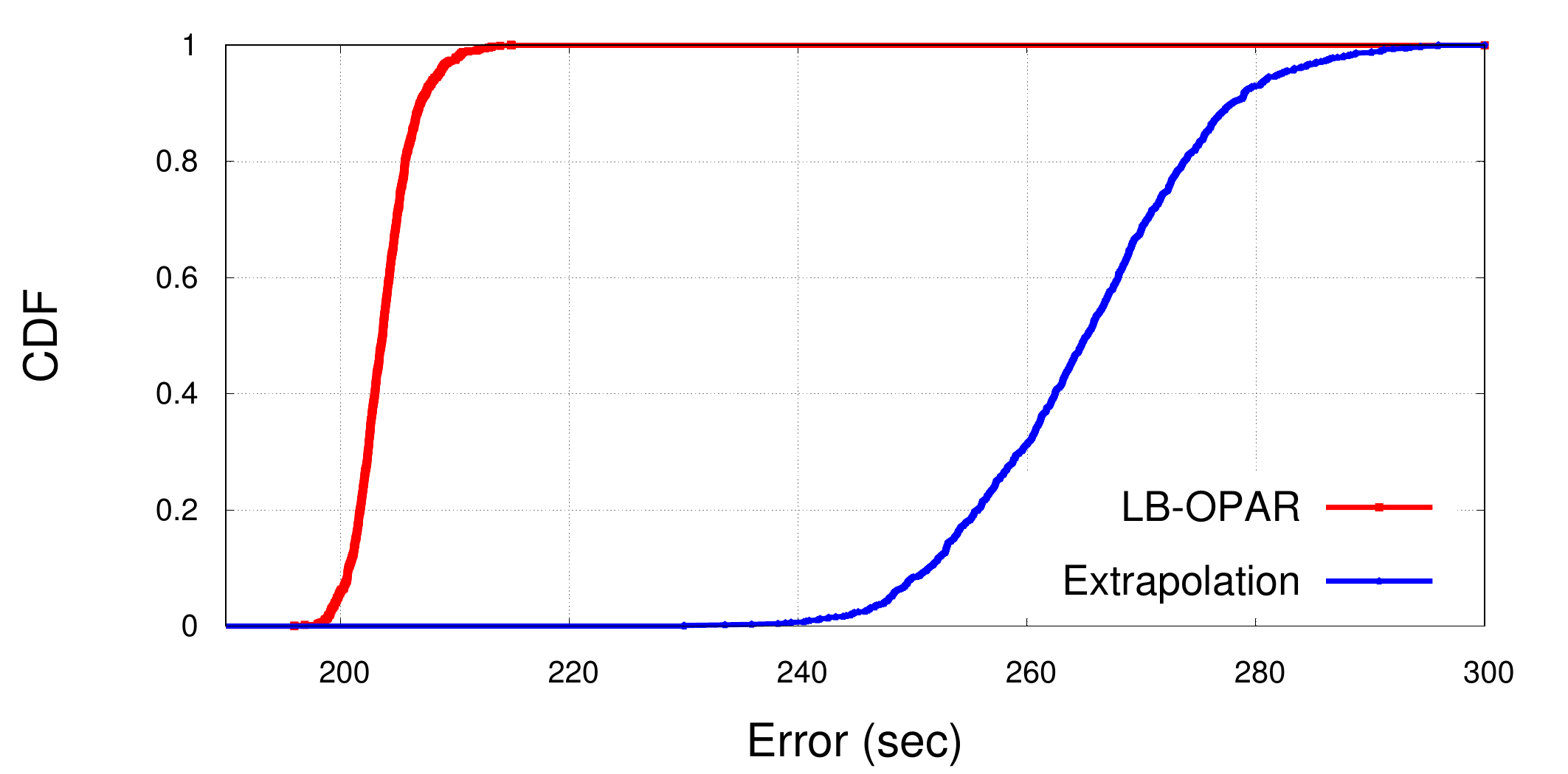}}
	\subfloat[Error standard deviasion]{\includegraphics[width=.5\linewidth]{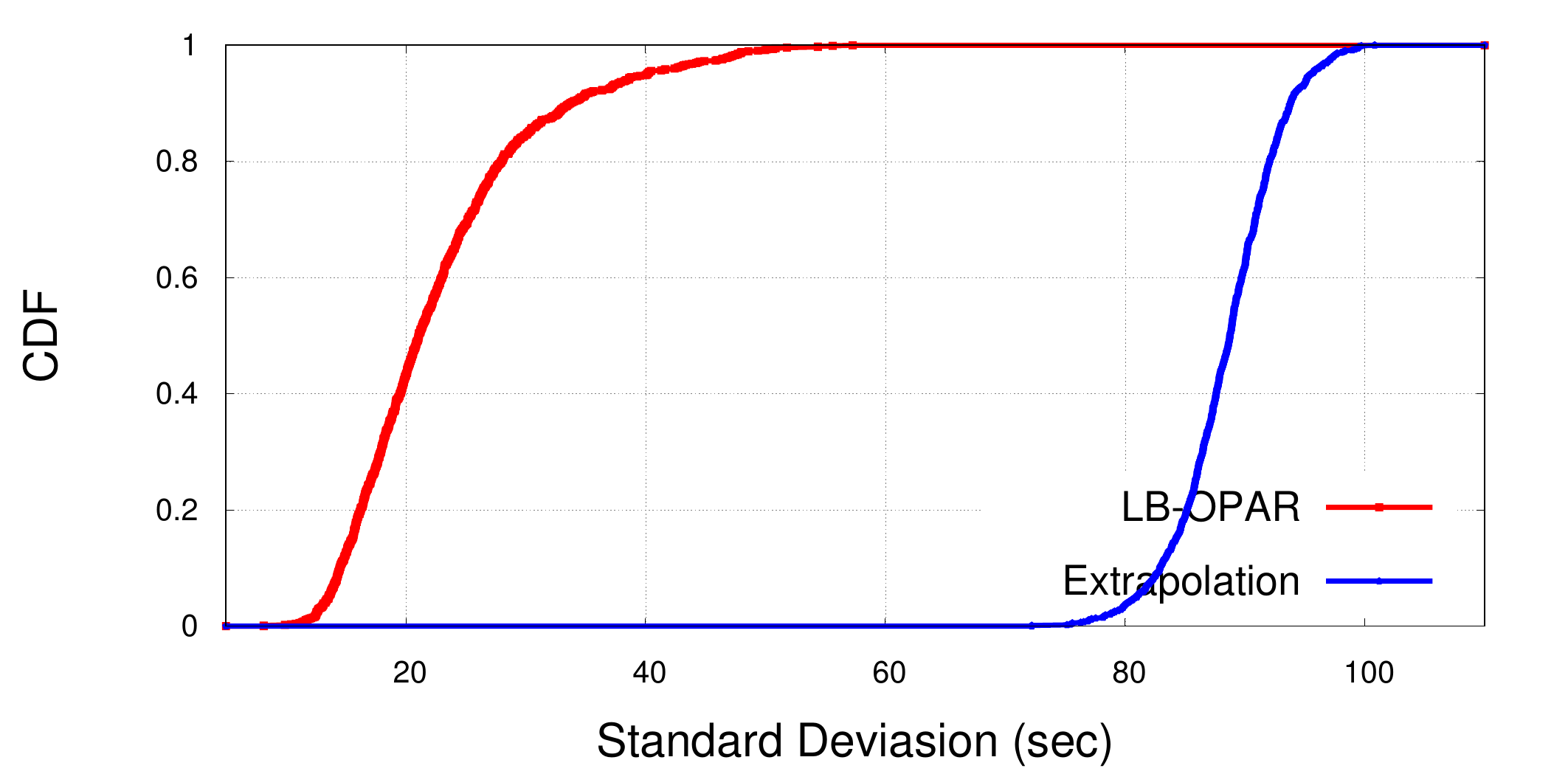}}\\
	\caption{A comparison of CDF of LB-OPAR prediction algorithm versus extrapolation.}
	\label{fig::accuracy}
\end{figure}

\subsection{Weight Analysis}

In this Section, we analyze the path length, path lifetime, and path load weights of the LB-OPAR optimization problem, i.e. $w_i$s. Accordingly, we designed a simulation scenario with the same setting of Table (\ref{tbl::simSetting}) where we vary the weights to measure the sensitivity of the network performance versus this variation. In this simulation, we first analyze the path load weight, i.e. $w_3$. We vary the value of $w_3$ from 0 to 1 in an increment step of $0.1$ and let $w_1=w_2$. Since $\sum_{i=1}^{3}w_i=1$, the value of $w_1$ and $w_2$ could be calculated easily. 

Simulation results show that the highest performance in terms of throughput, success rate, and FCT is achieved for the values of $w_3$ mentioned in Table (\ref{tbl::w3}). The results show that for the network with less load, assigning a value more than zero to $w_3$ might not benefit the network performance. However,  increasing the number of concurrent flows needs a higher weight for $w_3$ to achieve better performance in terms of all mentioned performance metrics. Furthermore, we find that there is no optimal case with $w_3=1$ which means path lifetime and path length always have an obvious effect in choosing the optimal path. We also find that increasing the network density has no significant effect on the value of $w_3$. 

\begin{table}[t]
\caption[]{Path load weight analysis}
\resizebox{\columnwidth}{!}{
\begin{minipage}{\textwidth}
\begin{tabular}{ | l || l | l | l | l | l | l | l | l | l | l | l | }
\hline
 no. of Flows & 1& 2 &3 &4 &5 &6&7&8&9&10\\\hline
 $w_3$& 0& 0 &0 &0.1 &0 &0.3&0.4&0.7&0.6&0.7\\\hline\hline
  no. of UAVs & 55& 60 &65 &70 &75 &80&85&90&95&100\\\hline
 $w_3$        & 0.1& 0 &0.2 &0.1 &0.2 &0&0&0&0&0.1\\\hline
\end{tabular}
\label{tbl::w3}
\end{minipage}}
\end{table}

Next, we analyze the effect of varying path-length weight, i.e. $w_1$, versus path lifetime weight, i.e. $w_2$, for the fixed path load weight represented in Table (\ref{tbl::w3}). By fixing the value of $w_3$ and varying the values of $w_1$ and $w_2$ we find that the best achievable performance is for the values represented in Tables (\ref{tbl::wsCon}) and (\ref{tbl::wsNode}) for different network loads and different network densities, respectively. Table (\ref{tbl::wsCon}) shows that for the network with less number of concurrent communication flows, i.e. less traffic, the path lifetime weight, i.e. $w_2$ is more important in comparison with path length. However, results show that the weights of path lifetime and path length are almost equal for the network with higher traffic. Table (\ref{tbl::wsCon})  also shows that the network density has no significant effect on the values of path load and path lifetime weights. Hence, for the network with five concurrent communication flows and different numbers of UAVs, the best performance is achievable when $w_1=w_2$, in most cases.


\begin{table}[t]
\caption[]{Weight Analysis for Different Network Loads}
\resizebox{\columnwidth}{!}{
\begin{minipage}{\textwidth}
\begin{tabular}{ | l || l | l | l | l | l | l | l | l | l | l | l | }
\hline
 no. of Flows & 1& 2 &3 &4 &5 &6&7&8&9&10\\\hline
 $w_3$& 0& 0&0 &0.1 &0 &0.3&0.4&0.7&0.6&0.7\\\specialrule{.1em}{.05em}{.05em} 
 $w_1$& 0.3 & 0.3    &0.3 &0.2  &0.5   &0.4   &0.4  &0.15   &0.2  &0.15\\
 $w_2$& 0.7 & 0.7 &0.7 &0.7 &0.5 &0.3&0.2&0.15&0.2&0.15\\\hline
  \end{tabular}
\label{tbl::wsCon}
\end{minipage}
}
\end{table}

\begin{table}[t]
\caption[]{Weight Analysis for Different Network Densities}
\resizebox{\columnwidth}{!}{
\begin{minipage}{\textwidth}
\begin{tabular}{ | l || l | l | l | l | l | l | l | l | l | l | l | l | }
\hline
 no. of UAVs & 50& 55 &60 &65 &70 &75 & 80 &85 &90&95&100\\\hline
 $w_3$& 0& 0.1 &0 &0.2 &0.1 &0.2&0&0&0&0 &0.1\\\specialrule{.1em}{.05em}{.05em} 
 $w_1$& 0.5 & 0.45    &0.5 &0.4  &0.6   &0.4   &0.5  &0.4   &0.4  &0.5 &0.45\\
 $w_2$& 0.5 & 0.45 &0.5 &0.4 &0.3 &0.4&0.5&0.6&0.6&0.5 &0.45\\\hline
  \end{tabular}
\label{tbl::wsNode}
\end{minipage}
}
\end{table}

\subsection{Simulation Results}

In this section, we compae the LB-OPAR simulation results with those of OPAR, AODV, DSDV, and OLSR, for both RWP and G-M mobility models. We compare the results for flow success rate, network throughput, and flow completion time for different network densities and different network loads, i.e. concurrent flows.

Fig. (\ref{fig::success}) shows the results for the flow success rate. DSDV and OLSR routing protocols show less than $10\%$ of flow success rate, in most cases, which is mainly because of their inability in catching up with the
high dynamicity of the nodes' movement in 3D space. AODV, however, represents better performance in terms of flow success rate in comparison with DSDV and OLSR. It successfully delivers $30\%$ of the flows, in average. Indeed, both OPAR and LB-OPAR significantly outperform the other algorithms by the average of $60\%$ flow success rate. As it is obvious, increasing the network load leads to more significant performance improvement for LB-OPAR compared to OPAR by the average of $5\%$. For the less number of concurrent flows, since the network load weight is zero or close to zero, no significant difference is shown between LB-OPAR and OPAR, as it is the case for the network with 5 concurrent flows and different densities, i.e. Fig. (\ref{fig::success}c) and (\ref{fig::success}d). Something worth mentioning is that AODV, DSDV, and OLSR show a higher success rate under the RWP mobility model than G-M. It seems that the randomness of the RWP model helps them to increase their performance. A similar fact was previously discovered in different network studies such as \cite{jellyfish}. OPAR and LB-OPAR show no significant difference under RWP and G-M mobility models, representing their stability of performing well in different conditions.        

\begin{figure}[t!]
	\centering
	\subfloat[Different number of  flows (RWP)]{\includegraphics[width=.5\linewidth]{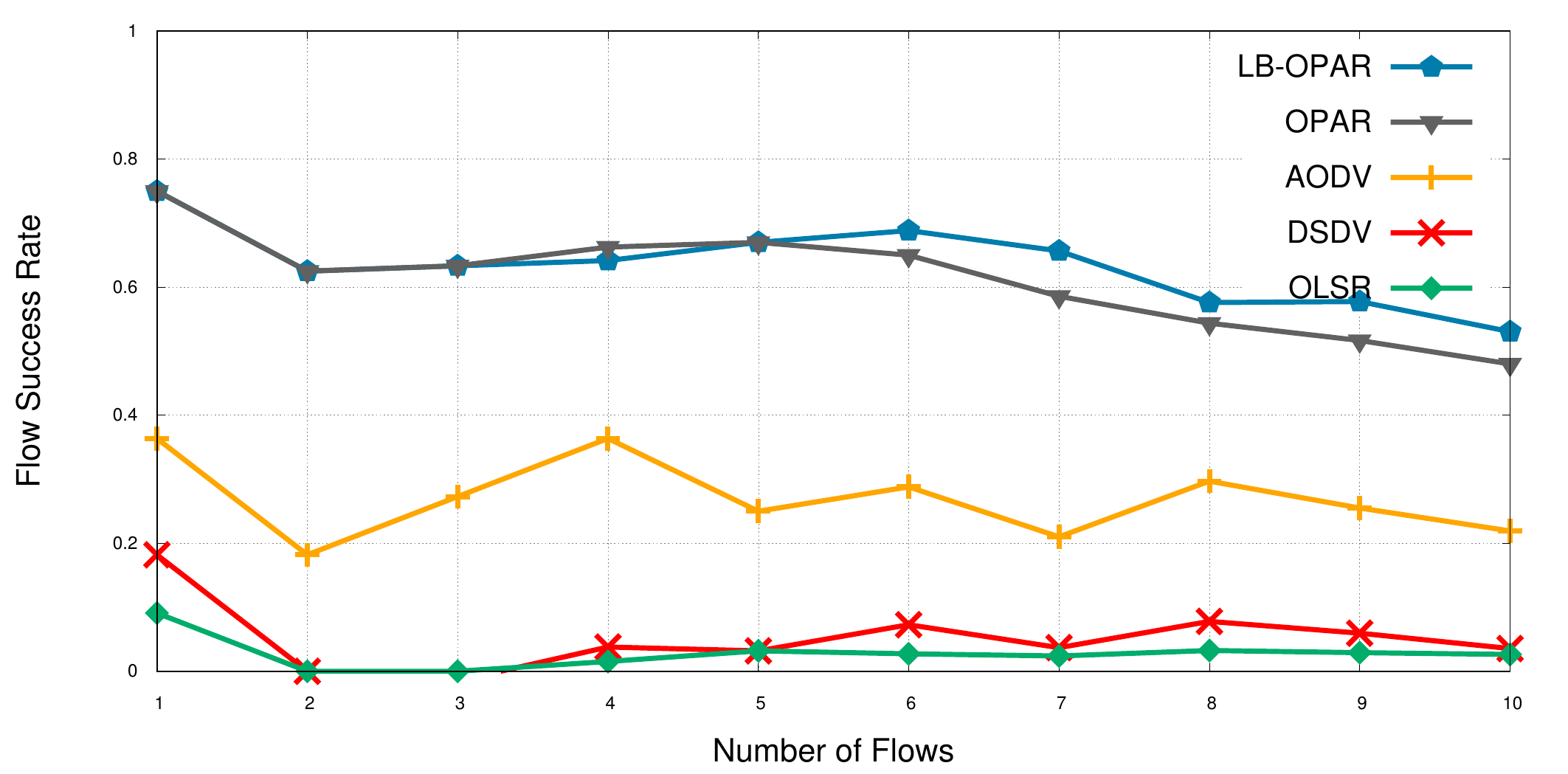}}
	\subfloat[Different number of  flows (G-M)]{\includegraphics[width=.5\linewidth]{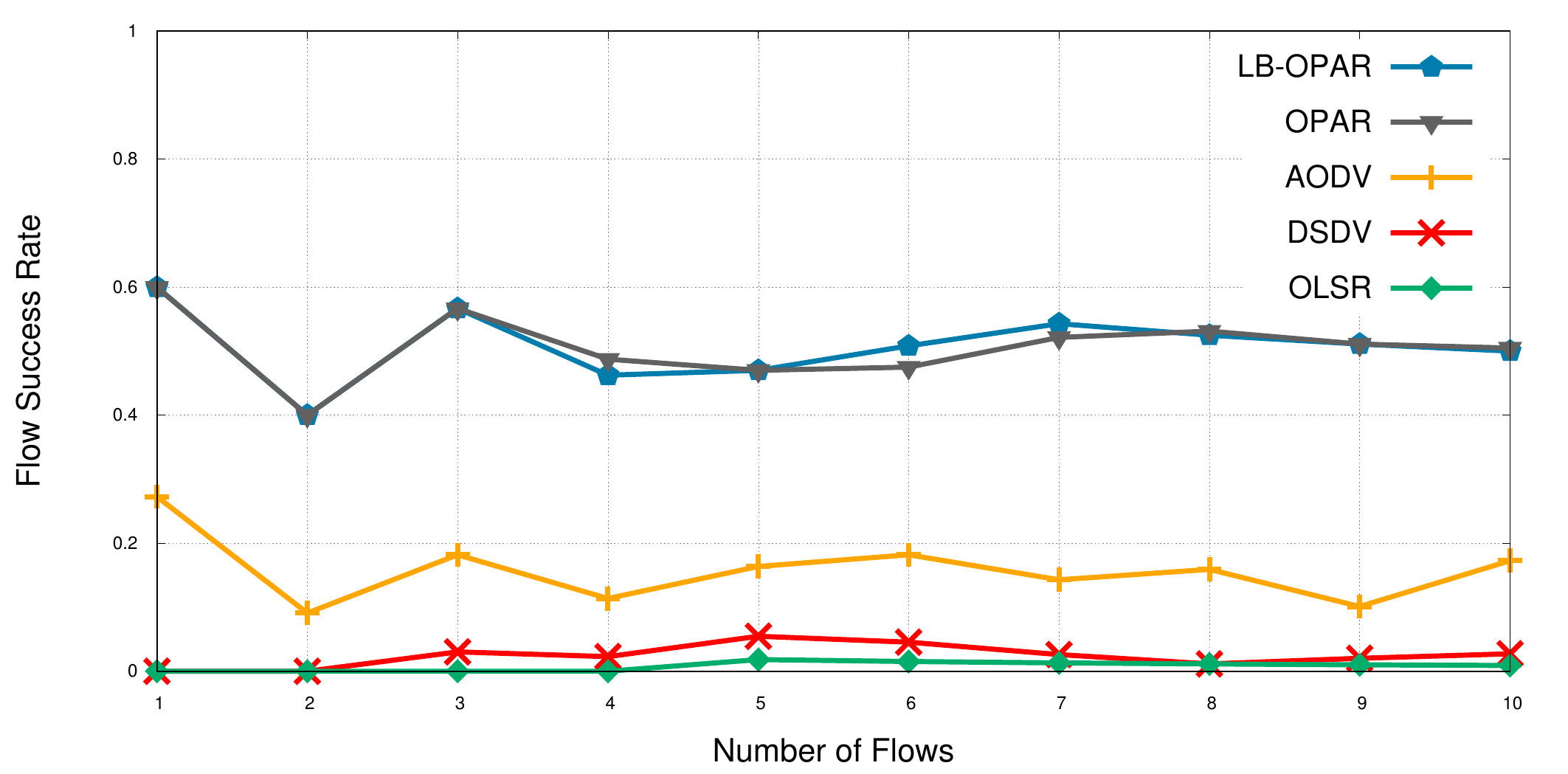}}\\
	\subfloat[Different number of UAVs (RWP)]{  \includegraphics[width=.49\linewidth]{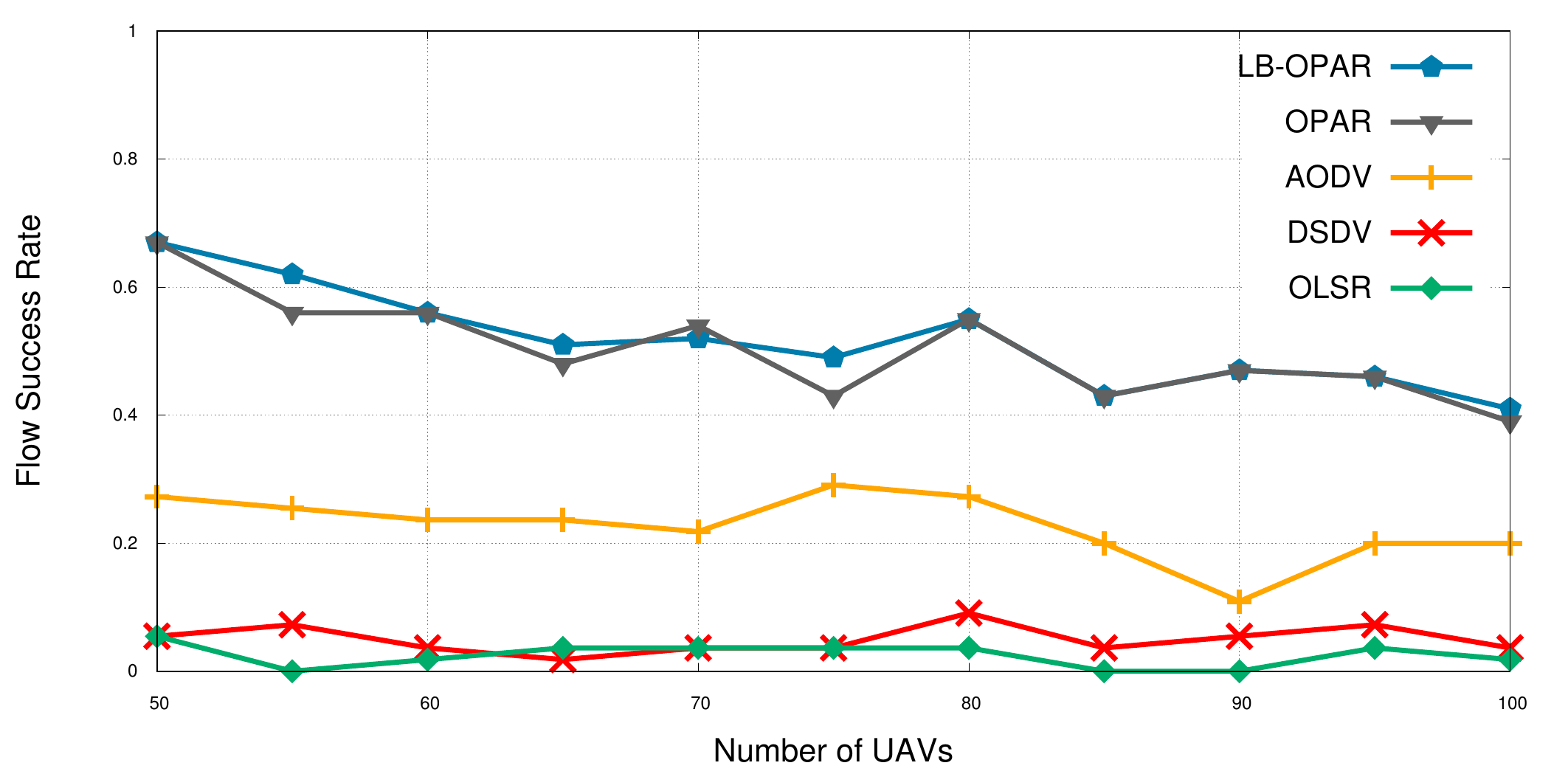}}
	\subfloat[Different number of UAVs (G-M)]{  \includegraphics[width=.49\linewidth]{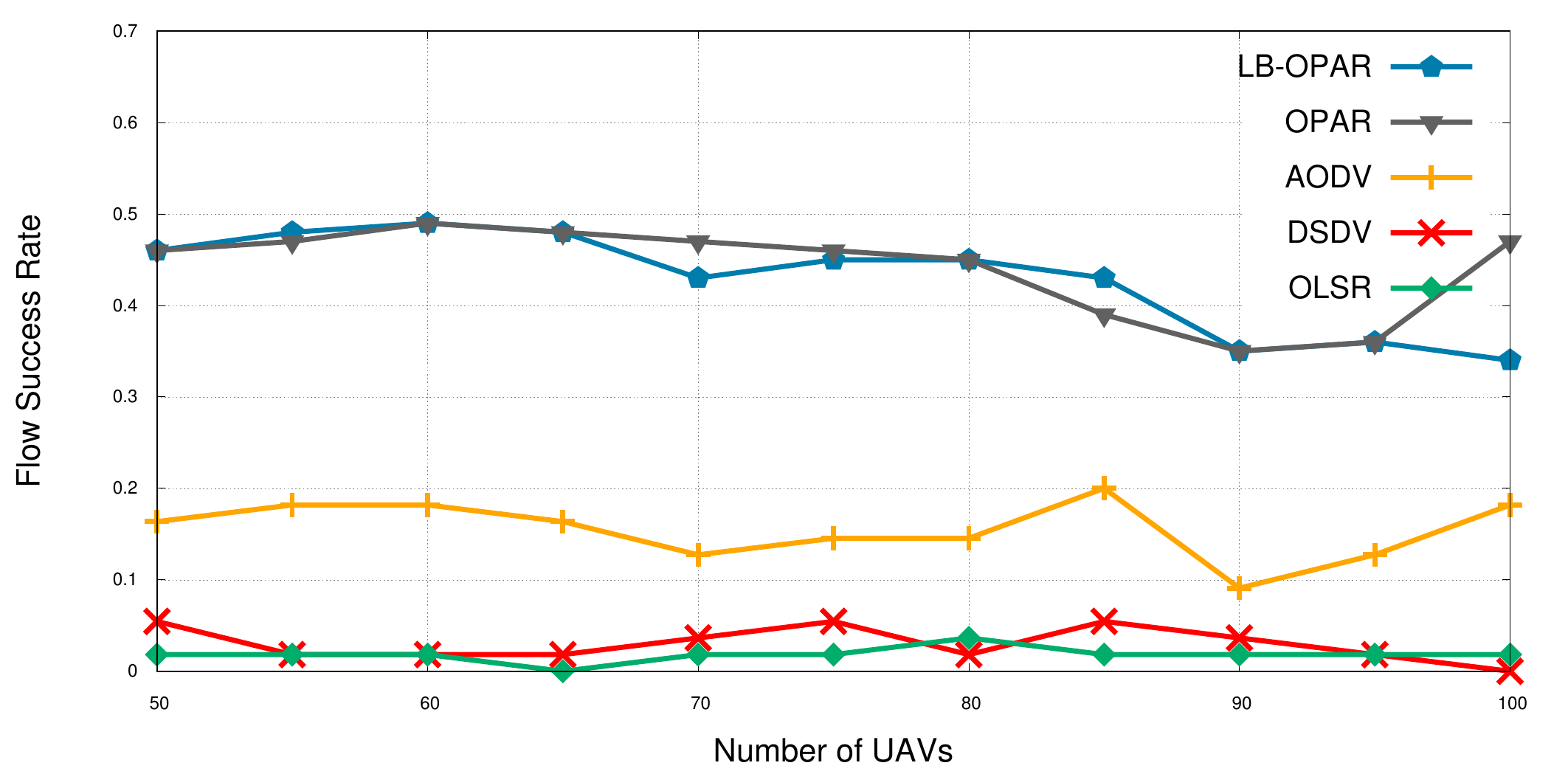}}
	\caption{A comparison of the flow success rate for different loads and densities.}
	\label{fig::success}
\end{figure} 

Next, we investigate the network throughput for all the combinations of different network loads, network densities, and mobility models. Fig. (\ref{fig::throughput}) shows the results where the achieved throughput by DSDV and OLSR is close to zero due to their failure in handling the fast topology changes. Interestingly and in contrast to the conventional mobile networks, increasing the number of concurrent flows shows no significant effect on the throughput, which means in highly dynamic networks, the bottleneck of the throughput is the reroute process due to the topology change not the load of the network. For this parameter, AODV
shows much better performance in comparison with DSDV and OLSR, while OPAR and LB-OPAR outperforms AODV by $100\%$ improvements. OPAR and LB-OPAR show competitive behavior. However, LB-OPAR outperforms OPAR in higher network loads and under the RWP mobility model for about $20\%$. For different network densities, since there are only five concurrent flows and the network load weight is zero or close to zero, we see no significant differences between LB-OPAR and OPAR. However, they outperform AODV by the average of $100\%$ improvements.  

\begin{figure}[t!]
	\centering
	\subfloat[Different number of  flows (RWP)]{\includegraphics[width=.5\linewidth]{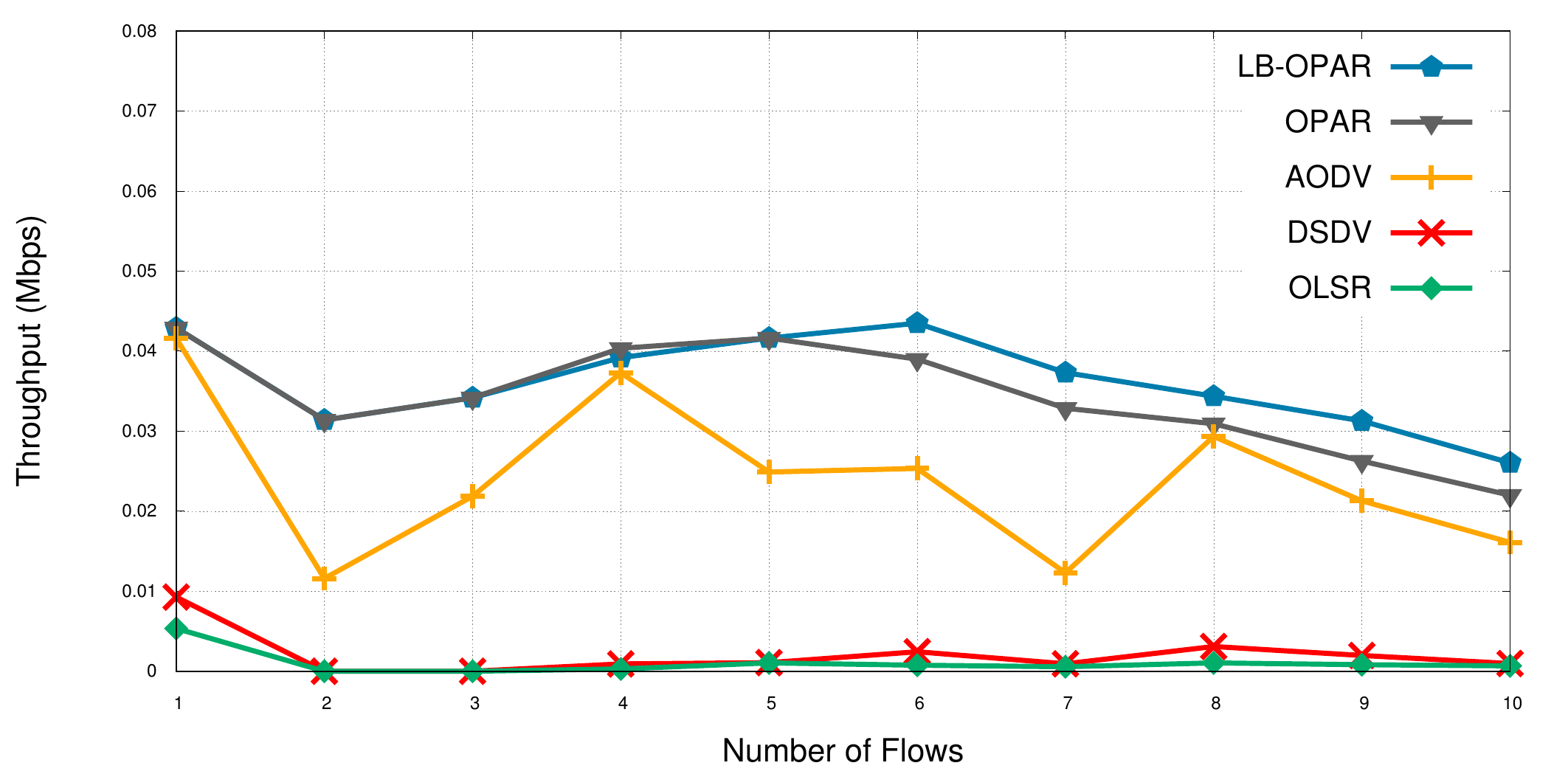}}
	\subfloat[Different number of flows (G-M)]{\includegraphics[width=.5\linewidth]{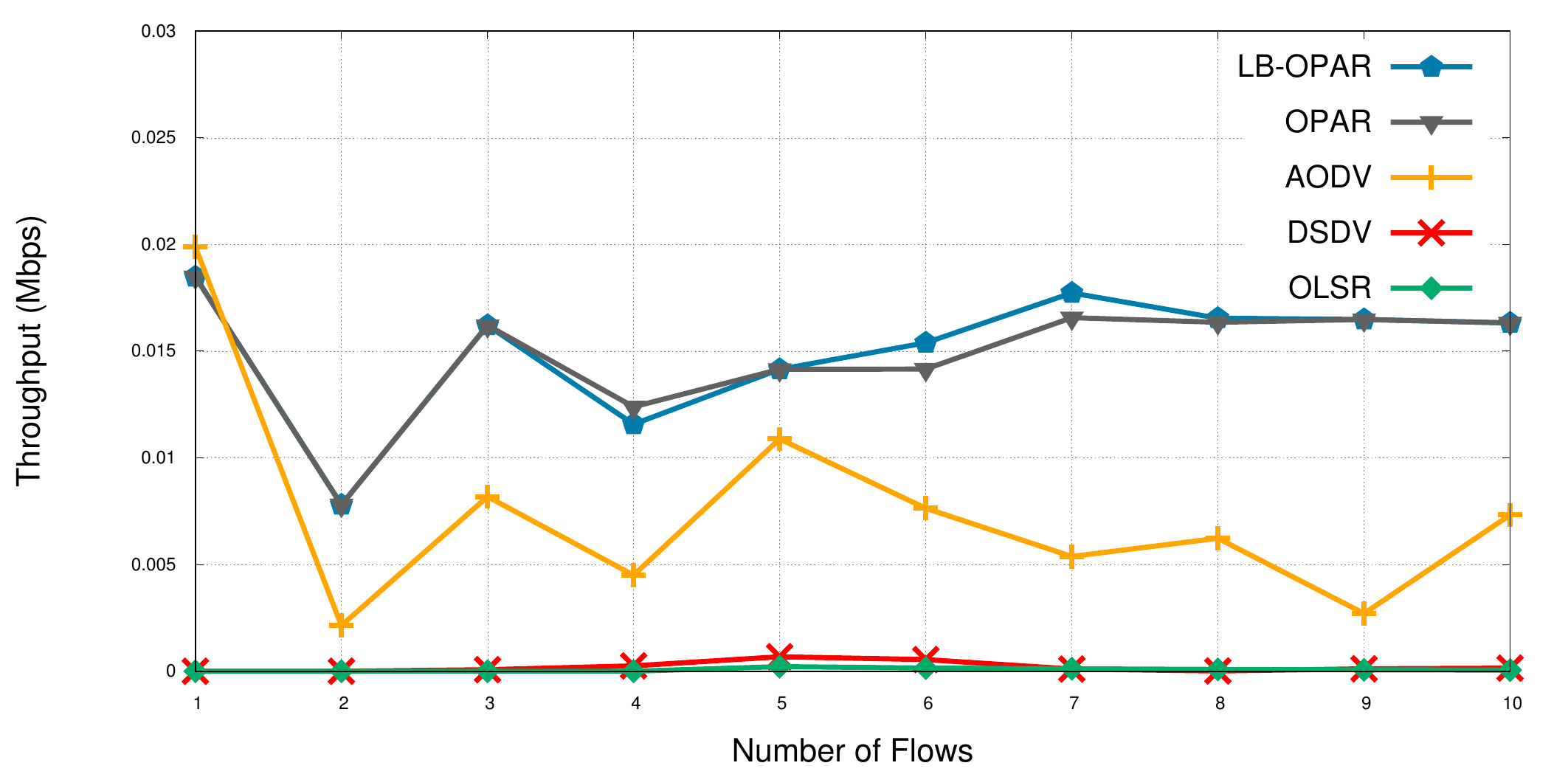}}\\
	\subfloat[Different number of UAVs (RWP)]{  \includegraphics[width=.49\linewidth]{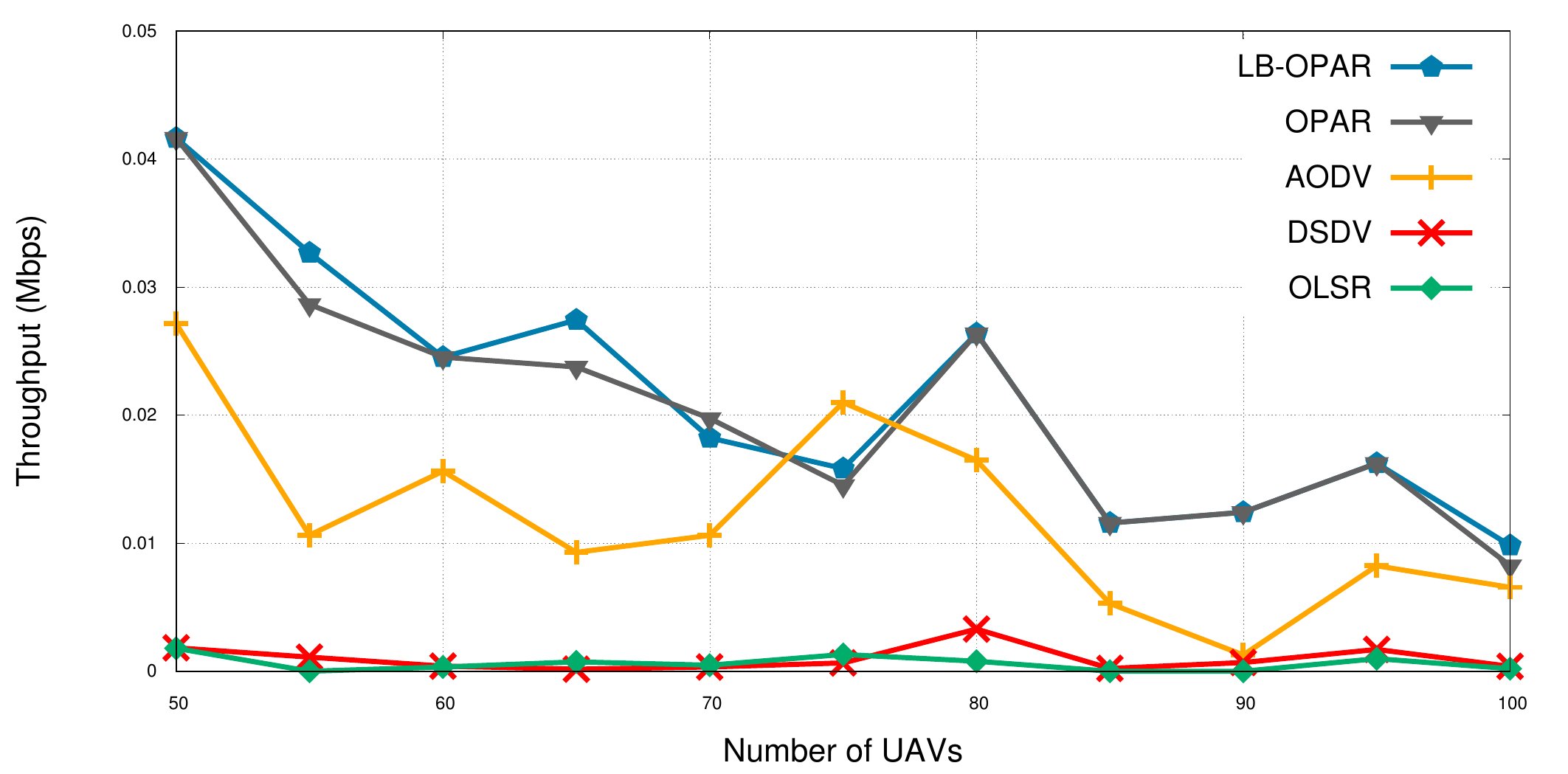}}
	\subfloat[Different number of UAVs (G-M)]{  \includegraphics[width=.49\linewidth]{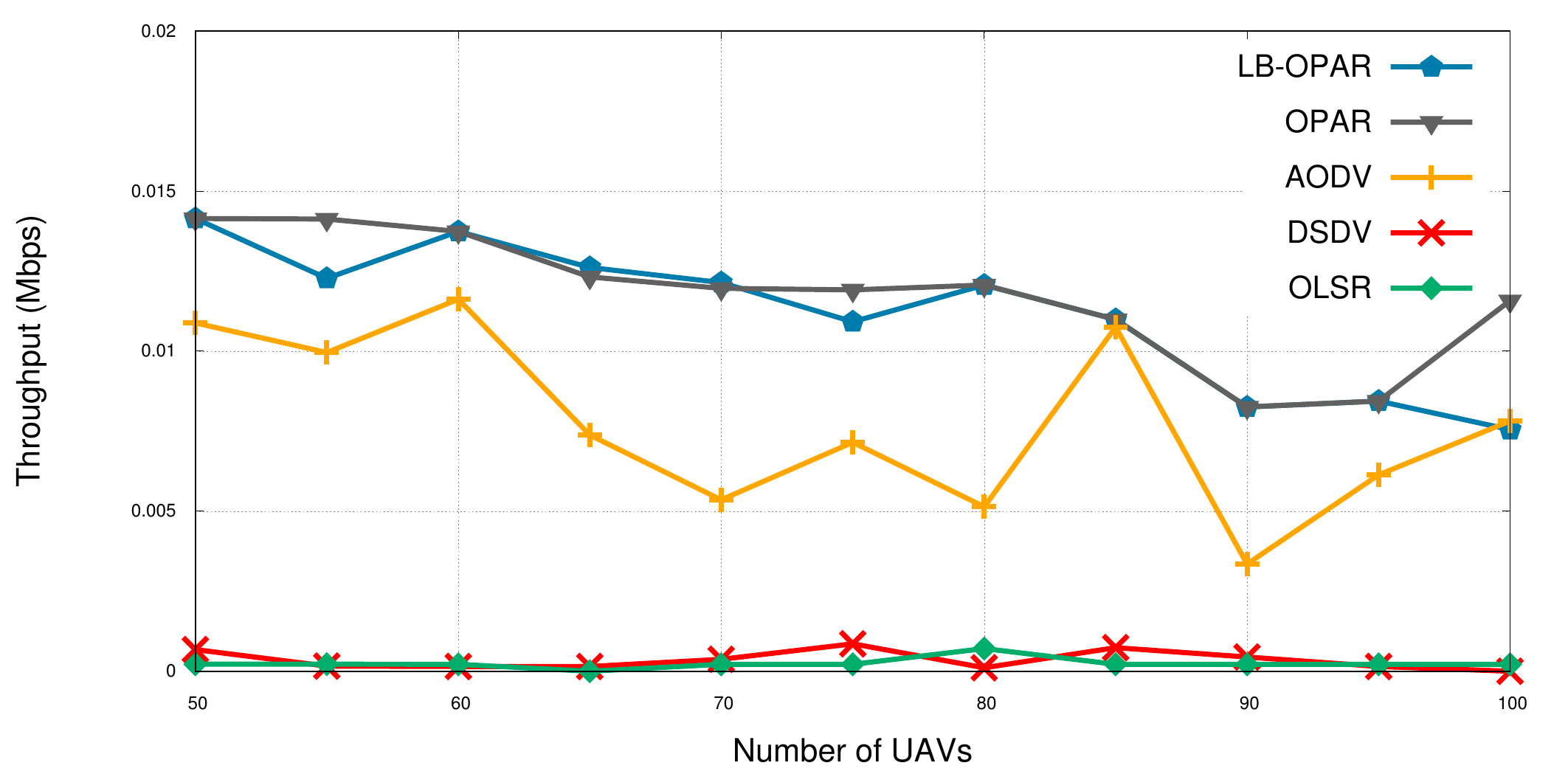}}
	\caption{A comparison of the throughput for different loads and densities.}
	\label{fig::throughput}
\end{figure}

Fig. (\ref{fig::FCT}) shows the results for flow completion time. As we mentioned earlier, we consider the simulation time as the lower bound of the FCT for the failed flows. Hence, the results of Fig. (\ref{fig::FCT}) show the lower bound for the different algorithms' FCT. This figure includes the combination of all network densities, loads, and the simulated mobility models. For all simulation settings,  OPAR and LB-OPAR show the lowest FCT compared to the other routing algorithms. However, LB-OPAR for the network with higher loads shows around $15\%$ less FCT in comparison with OPAR, under RWP mobility model. OPAR and LB-OPAR outperforms AODV by averagely $10\%$.  AODV, in its turn, shows a significantly lower FCT than that of OLSR and DSDV where their average FCT is close to the simulation time, i.e. 500 s, due to theire high failure rate. In many cases, OPAR and LB-OPAR show close FCT results, which is one of two cases. The first case is when there are one or  a few number of concurrent flows which lead to negligible effect of network load in choosing the optimal path. The latter case owing to the fact  that in some cases, there is only one path, or there are several paths, but just one of them has a reasonable objective value, where the others are very long or with a very short lifetime. Hence, both the OPAR and LB-OPAR select the same path, which leads to the close results.      

\begin{figure}[t!]
	\centering
	\subfloat[Different number of  flows (RWP)]{\includegraphics[width=.5\linewidth]{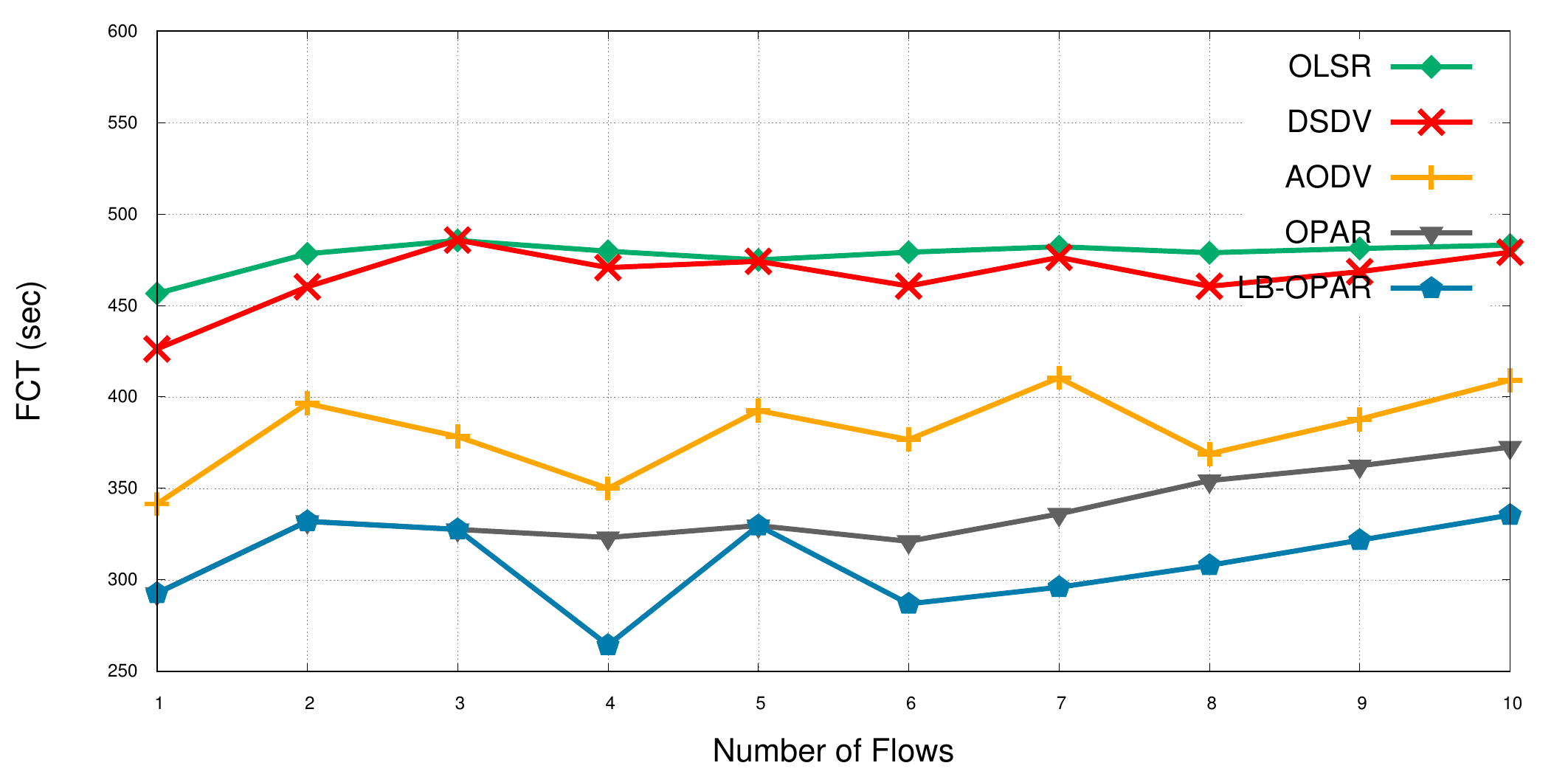}}
	\subfloat[Different number of flows (G-M)]{\includegraphics[width=.5\linewidth]{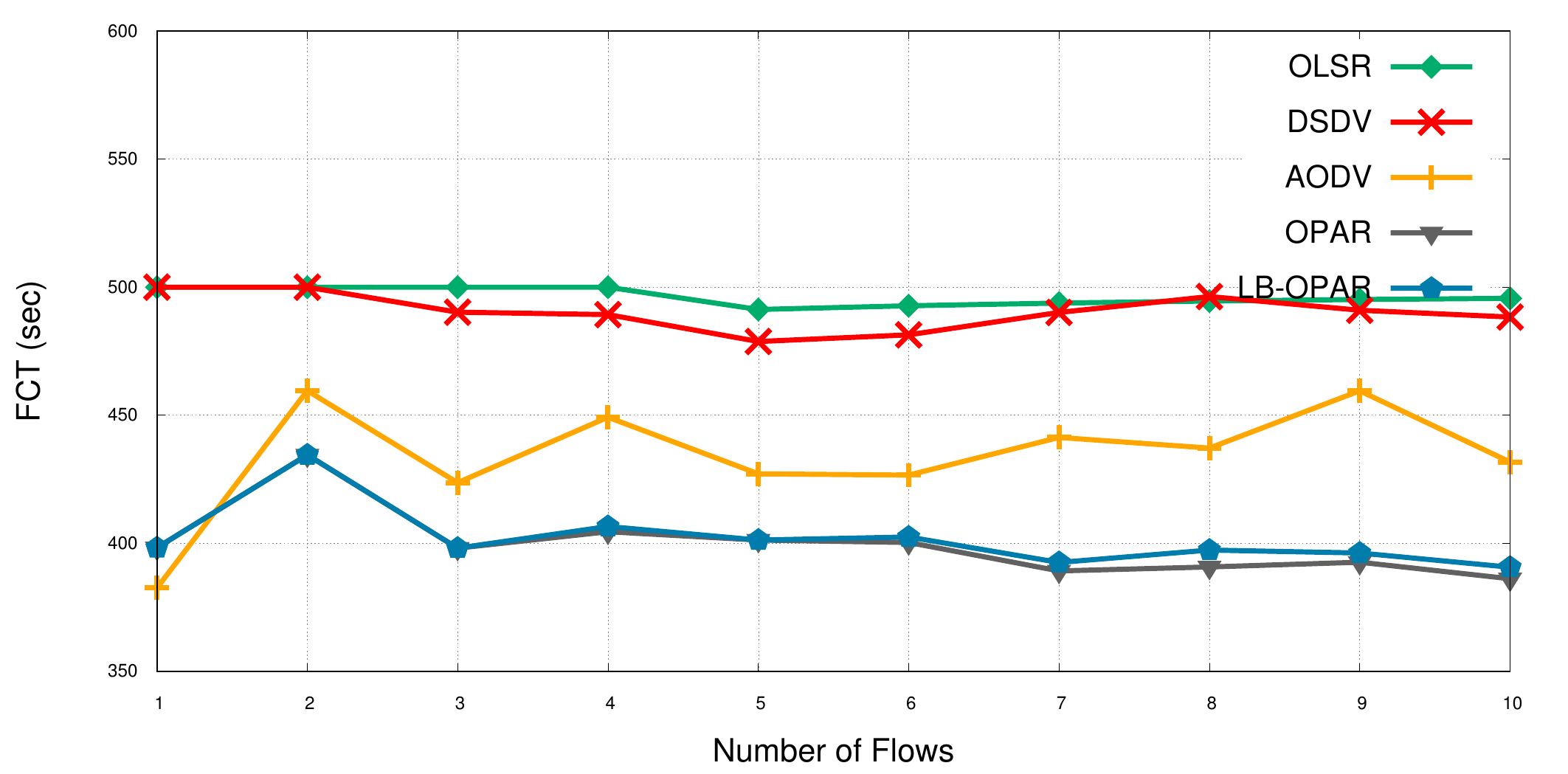}}\\
	\subfloat[Different number of UAVs (RWP)]{  \includegraphics[width=.49\linewidth]{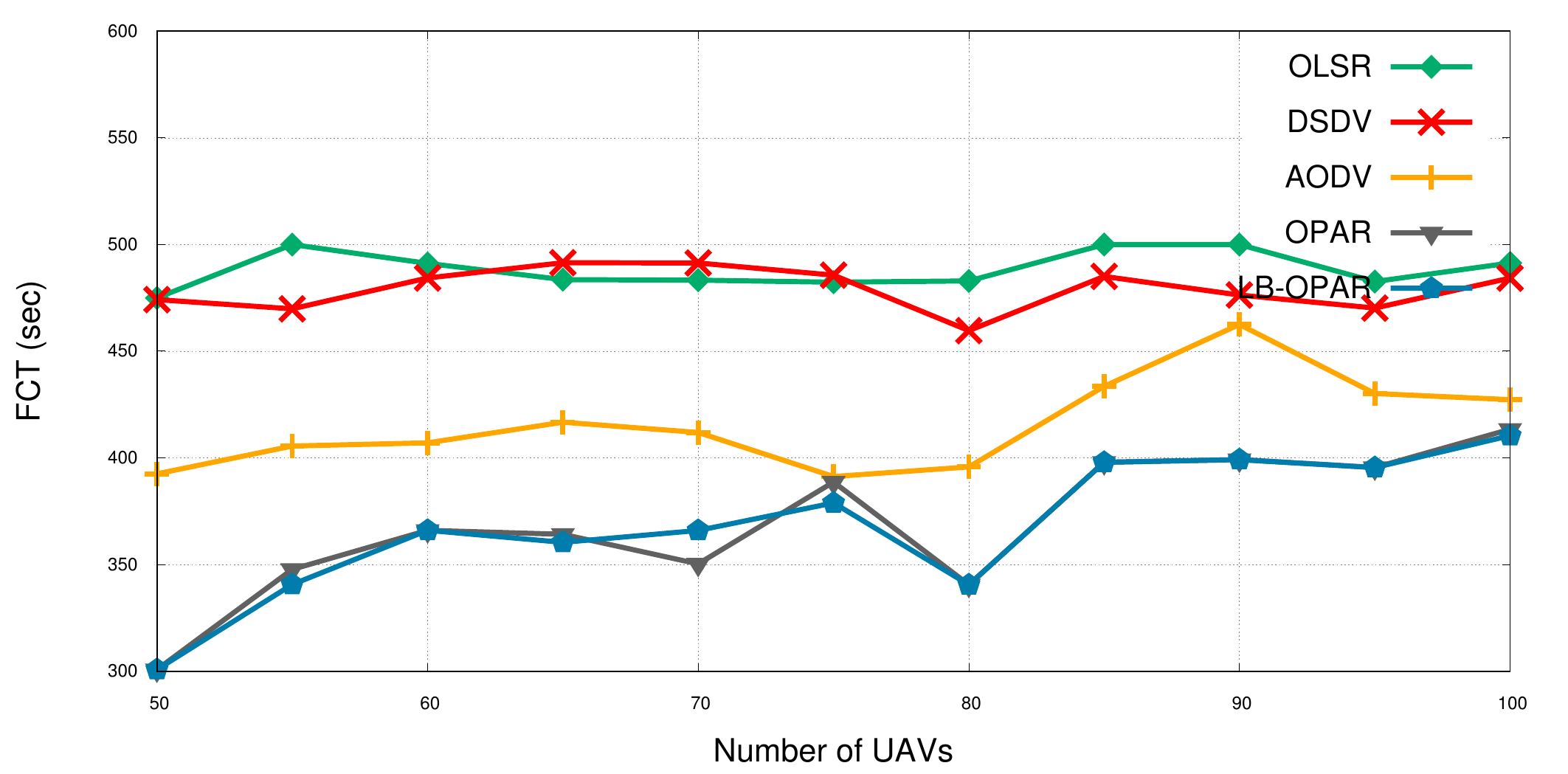}}
	\subfloat[Different number of UAVs (G-M)]{  \includegraphics[width=.49\linewidth]{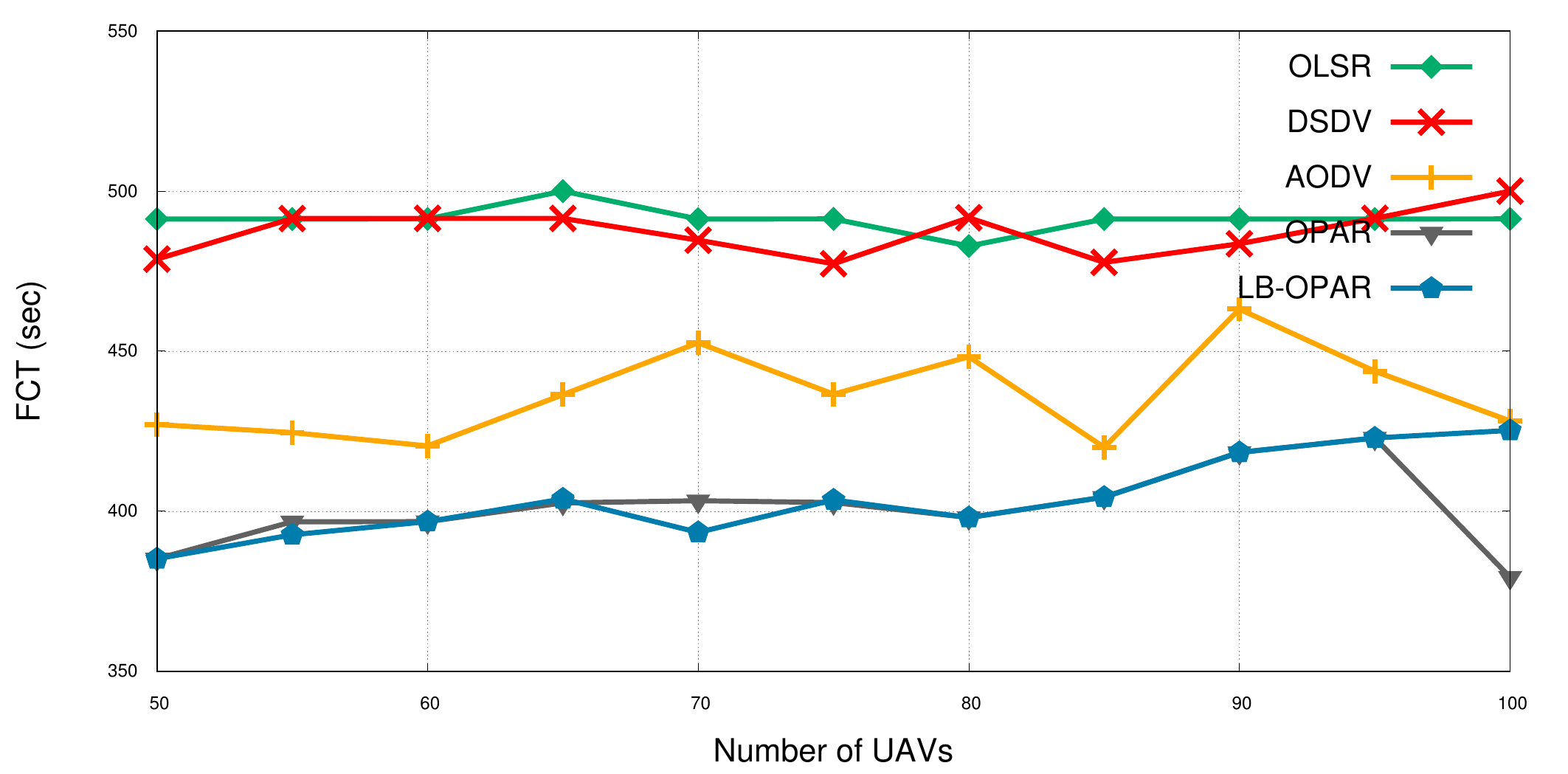}}
	\caption{A comparison of FCT for different loads and densities.}
	\label{fig::FCT}
\end{figure}

\section{Acknowledgement}
This material is based upon work supported by the Air Force Office of Scientific Research under award number FA9550-20-1-0090 and the National Science Foundation under Grant Number CNS-2034218. Any opinions, findings and conclusions or recommendations expressed in this material are those of the author(s) and do not necessarily reflect the views of the US government or AFRL.

\section{Conclusions and Future Work}
\label{sec::conclusion}

The highly dynamic nature of cooperative UAV networks causes the conventional routing algorithms, which aim at finding the shortest path, to fail in reaching the optimal network performance. While we have shown, in our recently published work OPAR, that considering the path lifetime in selecting the routes has a significant impact on the network performance, in this paper, we extend OPAR to consider the network load as well. Accordingly, in this paper, we proposed load-balanced OPAR, an SDN-based routing algorithm for cooperative UAV networks. We modeled the entire problem with an analytical model and proposed a lightweight solution for the proposed model. We implemented LB-OPAR in the ns-3 network simulator and exhaustively evaluated its performance. We show that LB-OPAR outperforms the benchmark routing algorithms AODV, DSDV, and OLSR. It also improves the performance of OPAR when the network works under higher loads. We further found that the performance bottleneck in highly dynamic UAV networks is basically the route breaks down, more than any other reason. Hence, proposing a routing algorithm to cover the multiple reroutes problem has significant importance in improving cooperative UAV networks' performance. Since LB-OPAR has a lightweight computational and space complexity, we aim at distributing the SDN controller tasks among all network nodes for fully distributed networks as future work.   


\nocite{*}
\bibliographystyle{IEEEtran}
\bibliography{References}

\end{document}